	\definecolor{brightpink}{rgb}{1.0, 0.0, 0.5}
	\newcommand{\revise}[1]{{{\color{black} #1}}}
\newtheorem{lemma}{Lemma}
\newtheorem{theorem}{Theorem}
\newtheorem{definition}{Definition}
\newtheorem*{definition*}{Definition}
\begin{document}

\title{Multi-Resolution Beta-Divergence NMF for \\ Blind Spectral Unmixing}

\author{Valentin~Leplat,
        Nicolas~Gillis,
        and~C\'edric~F\'evotte
\thanks{Valentin Leplat is with Center for Artificial Intelligence Technology, Skoltech, 
Bolshoy Boulevard 30, bld. 1. Moscow, Russia. He acknowledges the support by the European Research Council (ERC Advanced Grant no 788368) and the support by Ministry of Science and Higher Education grant No. 075-10-2021-068.
Email: V.Leplat@skoltech.ru.}
\thanks{Nicolas Gillis is with Department of Mathematics and Operational Research,
Facult\'e Polytechnique, Universit\'e de Mons,
Rue de Houdain 9, 7000 Mons, Belgium. NG acknowledges the support by the Fonds de la Recherche Scientifique - FNRS and the Fonds Wetenschappelijk Onderzoek - Vlanderen (FWO) under EOS Project no O005318F-RG47, by the European Research Council (ERC Starting Grant no 679515), and by the Francqui foundation. 
E-mails: nicolas.gillis@umons.ac.be.}
\thanks{C\'edric~F\'evotte is with IRIT, Université de Toulouse, CNRS, Toulouse, France. He acknowledges the support by the European Research Council (ERC Consolidator Grant no 6681839).
E-mail: cedric.fevotte@irit.fr.}
\thanks{Manuscript received in June 2021.}}

\markboth{Submitted to Signal Processing}{Leplat \MakeLowercase{\textit{et al.}}: Multi-resolution $\beta$ NMF}

\maketitle

\begin{abstract}

Many datasets are obtained as a resolution trade-off between two adversarial dimensions; for example between the frequency and the temporal resolutions  for the spectrogram of an audio signal, and between the number of wavelengths and the spatial resolution for a hyper/multi-spectral image. To perform blind source separation using observations with different resolutions, a standard approach is to use coupled nonnegative matrix factorizations (NMF). {As opposed to most previous works focusing on the least squares error measure, which is the $\beta$-divergence for $\beta = 2$, we formulate this multi-resolution NMF problem for any $\beta$-divergence, and propose a novel algorithm based on the   multiplicative updates (MU).} 
We show on numerical experiments that the MU are able to obtain high resolutions in both dimensions on two applications: (1)~blind unmixing of audio spectrograms: {to the best of our knowledge, this is the first time a coupled NMF model is used in this context}, and (2)~the fusion of hyperspectral and multispectral images: we show that the MU compete favorably with state-of-the-art algorithms in particular in the presence of non-Gaussian noise.

\end{abstract}

\begin{IEEEkeywords}
blind spectral unmixing, 
nonnegative matrix factorization, 
$\beta$-divergences, 
multiplicative updates, 
hyperspectral and multispectral image fusion
\end{IEEEkeywords}

%
\IEEEpeerreviewmaketitle


\section{Introduction}\label{sec_intro}

\IEEEPARstart{S}{pectral} unmixing is the problem of decomposing the spectra of mixed signals into a set of source spectra and their corresponding activations. The activations give the proportion of each source within each mixed signal spectrum.   
In this paper, we focus on the unmixing of nonnegative signals with multiple resolutions using nonnegative matrix factorization (NMF). 

\subsection{Nonnegative matrix factorization} Over the last two decades, NMF~\cite{lee1999learning} has emerged as a useful method to decompose 
mixed nonnegative signals, including audio signals 
\cite{6784107}
and hyperspectral images~\cite{Bioucas2012, ma2014signal}; see also~\cite{8653529, Gillis2020} 
and the references therein. 
Given a nonnegative matrix $V \in \mathbb{R}_{+}^{F \times N}$ and an integer factorization rank  $K \leq \text{min}(F,N)$, NMF aims to compute a  nonnegative matrix $W$ with $K$ columns and a  nonnegative matrix $H$ with $K$ rows such that $V \approx WH$. 
Each column of $V$ is the mixture of the sources, so that each column of $W$ corresponds to a source estimate, and each column of $H$ indicates which source is active and in which intensity in each mixture. Mathematically, we have, for all $j$, 
\[
V(:,j) \approx \sum_{k=1}^K W(:,k) H(k,j) , 
\] 
where $W(:,k)$ represents the $k$th source, and $H(k,j)$ is the activation of the $k$th source within the $j$th mixture.

\subsection{Multi-resolution data} In many applications, the input data usually results from a trade-off between two adversarial dimensions. Let us illustrate this on two applications which will be used throughout the paper.  

\paragraph{Audio signals} 
To unmix  audio signals, their time-frequency matrix representation $V$ is often used; see, e.g.,~\cite{smaragdis2003non,fevotte2009nonnegative,6784107}. 
In a nutshell, this matrix is computed as follows. 
The temporal audio signal is divided into short segments of the same length. These segments are  multiplied by a window function and then the {magnitude} Fourier transform of each windowed segment is computed to obtain a column of $V$. Hence each column of $V$ corresponds to a time window, while each row corresponds to a frequency, and the entry $V(i,j)$ is the intensity of the $i$th frequency  at the $j$th time window (e.g., the modulus of the Fourier coefficient). 
The window length fixes the frequency and the time resolutions. Larger time windows lead to a higher frequency resolution but comes at the cost of lower temporal resolution, and vice versa. 
Factorizing $V$ using NMF provides the matrix $W$ whose columns contain the spectral content of the sources, and the matrix $H$ whose rows contain the activations of the sources over time; 
see~\cite{6784107} and the references therein for more details. 

\paragraph{Hyper/multi-spectral images} An image measures the intensity of light in both spectral and spatial dimensions. 
A multispectral image (MSI)  typically measures between 4 and 30 spectral bands, and has a high spatial resolution,  whereas a hyperspectral  image (HSI) has high spectral resolution, typically between 100 and 200 spectral bands, but low spatial resolution. 
MSI/HSI are typically represented as a wavelength-by-pixel nonnegative matrix $V$ where the entry $V(i,j)$ is the intensity of light at the $i$th wavelength located at the $j$th pixel. Each column of $V$ records the so-called spectral signature of a pixel, and each row is a vectorized image at a given wavelength. 
Factorizing $V$ using NMF gives the matrix $W$ whose columns contain the spectral signatures of the sources, called endmembers, 
and the matrix $H$ whose rows contain the abundances of the pixels for each endmember; 
see~\cite{Bioucas2012, ma2014signal}  and the references therein for more details. 
Given a MSI and a HSI of the same scene, computing a high spatial and spectral resolution image of that scene, referred to as the super-resolution (SR) image, 
is known as the HSI-MSI fusion problem which has been extensively studied;  see for 
example~\cite{Price1987,
Gillespie1987, 
Carper1990,
Chavez1991,
Nishii1996,
zhukov1999unmixing,  
Ranchin2000,
Aiazzi2002, 
zurita2008unmixing, 
Yokoya2012CNMF, 
wei2015hyperspectral,
yokoya2017, lin2017convex, 
Kanatsoulis2018Ma}. 
A popular and effective method to perform this task is to perform coupled NMF decompositions of the HSI and MSI; see Section~\ref{sec_probform} for the details.


\paragraph*{Contribution and outline}

In this paper, we consider the fusion of multi-resolution data using a coupled NMF model which is described in Section~\ref{sec_probform}. 
{As opposed to most previous works focused on the case $\beta = 2$, that is, least squares error,} we allow to use any $\beta$-divergence ({a large family of divergences commonly used in NMF \cite{Fevotte_betadiv}}) to measure the quality of the low-rank approximation. 
We refer to this model as multi-resolution $\beta$-NMF (MR-$\beta$-NMF). 
To tackle MR-$\beta$-NMF, we propose in Section~\ref{sec_algo} multiplicative updates that are guaranteed to decrease the objective function at each step, using the majorization-minimization principle. 
{This principle is already present in the NMF literature, but we adapted it to handle our specific model which is more challenging.} 
We also explain how the downsampling operators (that map high resolution data to low resolution data) can be estimated for one-dimensional signals such as audio signals.  Section~\ref{sec_audioexp} presents numerical results on audio datasets: {as far as we know, it is the first time such an approach is used in this context.}  
For audio signals, it is well-known that using $\beta$-divergences for $\beta < 2$ is crucial in practice; see, e.g.,~\cite{fevotte2009nonnegative,6784107}. 
MR-$\beta$-NMF  leads to solutions with both high spectral resolution and high temporal resolution. 
In Section~\ref{sec_hsmsexp}, MR-$\beta$-NMF is shown to be competitive with state-of-the-art techniques for the HSI-MSI fusion problem. {As far as we know, it is the first time that a HSI-MSI fusion algorithm tackles $\beta$-divergence for $\beta \neq 2$.}  
As we will see, considering $\beta$-divergences for $\beta \neq 2$ leads to much better solutions in the presence of non-Gaussian noise. In particular, we show that in the presence of Poisson noise, using $\beta = 1$, that is, the Kullback-Leibler divergence, outperforms standard approaches. {In the appendix, we also show that in the presence of multiplicative Gamma noise, our proposed  MR-$\beta$-NMF with the Itakura-Saito divergence ($\beta = 0$) outperforms the state of the art by a large margin.}


\section{Formulation of MR-$\beta$-NMF}\label{sec_probform}

The aim of multi-resolution unmixing is to estimate the sources and their 
activation with high resolutions in adversarial dimensions, given  observable data that show high resolution in one dimension only. 

In this section, we present a model widely used in the hyperspectral imaging community for HSI-MSI fusion. As we will see in Section~\ref{sec_numexpaudio}, this model is also applicable to decompose audio signals. 
For simplicity, we assume in this paper that we are given only two input data matrices, one with low resolution in one dimension, and the other with low resolution in the other dimension. 
Generalizing to more than two input data matrices with different resolutions in the two dimensions is straightforward but would complicate the presentation. 

Let $X \in \mathbb{R}_{+}^{F_\ell \times N}$ and $Y \in \mathbb{R}_{+}^{F \times N_\ell}$ be these two matrices, 
where $X$ has low resolution in the first dimension, that is, $F_{\ell} < F$, and 
$Y$ has low resolution in the second dimension, that is, $N_{\ell} < N$. 
Given $X$ and $Y$, 
the goal is twofold: 
(1)~compute $V \in \mathbb{R}_{+}^{F \times N}$ that has  high  resolution in both dimensions, and 
(2)~identify the sources and activations that generated $X$ and  $Y$. 
{A standard approach to achieve these  
goals~\cite{Yokoya2012CNMF, 
wycoff2013non, 
wei2015fast, wei2015hyperspectral, 
xu2015spatial, 
zhang2015hyperspectral, 
zhou2017hyperspectral, lin2017convex} 
is to rely on the following two assumptions:  }
\begin{enumerate}
    \item {The matrix $V$ satisfies the linear mixing model, that is, $V$ can be decomposed using NMF with 
    \begin{equation}\label{approxV}
V \approx WH, 
\end{equation} 
    where the columns of $W \in \mathbb{R}_{+}^{F \times K}$ are the {elementary spectra of the sources},  
     $H \in \mathbb{R}_{+}^{K \times N}$ is the activation matrix, and $K$ is the number of sources that generated $V$ (e.g., the number of endmembers in a HSI);   see~\cite{Bioucas2012, ma2014signal} and the references therein in the context of HSIs, and \cite{fevotte2009nonnegative} in the context of audio signals. } 
    
    \item The matrices $X$ and $Y$ are obtained using linear downsampling operators of $V$, that is, 
    \begin{equation}\label{approxV1}
X \approx RV, 
\end{equation} 
where $R \in \mathbb{R}^{F_{\ell} \times F}$ is the downsampling matrix in the first dimension,  and 
\begin{equation}\label{approxV2}
Y \approx VS, 
\end{equation} 
where $S \in \mathbb{R}^{N \times N_{\ell}}$ is the downsampling matrix in the second dimension. 
In practice, these downsampling matrices need to be estimated. 
\end{enumerate}

For HSI-MSI fusion, a high spatial resolution image $X$, the MSI, and a high spectral resolution image $Y$, the HSI, are available to reconstruct 
the target SR image, $V$, that has high spectral and high spatial resolutions. 
These images result from the linear spectral and spatial degradations of the SR image $V$, given by the equations  \eqref{approxV1} and \eqref{approxV2}.

\subsection{Multi-Resolution $\beta$-NMF}\label{sec2_2}

Substituting~\eqref{approxV} into~\eqref{approxV1} 
and~\eqref{approxV2}, we obtain:  
\begin{equation}\label{approxV1_final}
X \approx RWH,
\end{equation} 
\begin{equation}\label{approxV2_final}
Y \approx WHS. 
\end{equation} 
Equation~\eqref{approxV1_final} (resp.~\eqref{approxV2_final}) correspond to the linear spectral mixture model degraded in the first (resp.\ second) dimension.  

Given $X$ and $Y$, to solve the multi-resolution problem and obtain $V$, 
we need to estimate $W$, $H$, $R$ and $S$. 
Trying to minimize the approximation errors 
in~\eqref{approxV1_final} and \eqref{approxV2_final} leads to the following optimization problem 
\begin{equation}\label{MR_betaNMF_model}
 \min_{W \geq 0, H\geq 0, R \geq 0, S \geq 0} 
 D_{\beta}(X \| RWH) + \lambda D_{\beta}(Y\|WHS), 
\end{equation} 
where 
$A \geq 0$ means that $A$ is component-wise nonnegative, $\lambda$ is a positive penalty parameter, and 
\[
D_{\beta}(Z\|ABC)=\sum_{fn} d(Z_{fn}\|[ABC]_{fn}), 
\] 
with $d(x\|y)$ a {measure of fit} 
between the scalars $x$ and $y$. 
This model is the coupled NMF approach proposed \revise{in the literature for 
HSI-MSI 
fusion~\cite{yokoya2011coupled,Yokoya2012CNMF,zhang2016fusion,yokoya2017,wu2018hi,li2020sparsity}.}


When the downsampling matrices $R$ and $S$ are known, the objective function is minimized over $W$ and $H$ only. 
In general $R$ and $S$ respect a particular sparsity pattern; for example, for HSI-MSI fusion, the spectral signature of a pixel in $X$ will be a linear combination of the spectral signatures of nearby pixels from $V$; see Section~\ref{sec:downsampop} for more details. 
As our algorithm will rely on multiplicative updates, entries initialized at zero remain zero in the course of the optimization process. 

One of the most widely used measure of fit in the NMF literature is the  $\beta$-divergence, denoted $d_{\beta}(x\|y)$, and equal to 
\[
\left\{
  \begin{array}{lr}
    \frac{1}{\beta \left(\beta-1\right)}   \left(x^{\beta}+\left(\beta-1\right)y^{\beta}-\beta xy^{\beta-1}\right)  \text{ for } 
     \beta \neq 0,1, \\
    x\log\frac{x}{y}-x+y           
         \text{ for } \beta=1,  \\
    \frac{x}{y}-\log\frac{x}{y}-1  
        \text{ for } \beta=0,  
  \end{array}
\right. 
\] 
where $x$ and $y$ are nonnegative scalars. 
For $\beta=2$, this amounts to the standard squared Euclidean distance since $d_{2}(x\|y) = \nicefrac{1}{2} (x-y)^2$. 
For $\beta=1$ and $\beta=0$, 
the $\beta$-divergence corresponds to the Kullback-Leibler (KL) divergence and the Itakura-Saito (IS) divergence, respectively. 
The data fitting term should be chosen depending on the noise statistic assumed on the data. For example, 
using the Euclidean distance corresponds to the maximum likelihood estimator for i.i.d.\@ Gaussian noise, that is, it assumes that $V(i,j)$ is a sample of the normal distribution of mean $(WH)_{i,j}$ and variance $\sigma$ for all $i,j$. Similarly, the 
KL divergence corresponds to a Poisson distribution, and the IS divergence to multiplicative Gamma noise; see~\cite{fevotte2009nonnegative} for more details. 
KL and IS divergences are usually considered for amplitude spectrogram and power spectrogram, respectively.
Both KL and IS divergences are more adapted to audio spectral unmixing than the Euclidean distance; see \cite{fevotte2009nonnegative,mlsp12}. The \revise{Euclidean} distance is the most widely used to tackle the HSI unmixing problem as well as the HSI-MSI fusion problem. 
However, when no obvious choice of a specific divergence is available, finding the right measure of fit, namely the value for $\beta$,  is a model selection problem~\cite{7194802}. 


\subsection{Downsampling matrices} \label{sec:downsampop}

The downsampling matrices, 
$R$ and $S$ in~\eqref{approxV1}, are application dependent. Let us discuss the two applications we focus on in this paper.

\subsubsection{HSI-MSI fusion}

The matrix $R$ from~\eqref{approxV1} is the relative spectral bandpass responses from the SR image to the MSI, while the matrix $S$ introduced in \eqref{approxV2} specifies the spatial blurring and down-sampling responses that
result in the HSI. The matrices $R$ and $S$ can be acquired either by cross-calibration~\cite{yokoya2013}, or by estimations 
from the HSI and MSI~\cite{Yokoya2012CNMF,Simoes2016}.

\subsubsection{Audio spectral unmixing}

In the case of the audio spectral unmixing, as we restrict to the case of two input matrices, the unmixing will be based on a  
high-frequency-resolution (HFR) matrix and low-frequency-resolution (LFR) matrix, the first one obtained with a smaller window size when computing the spectrogram. 
As far as we know, there is no prior work on estimating the downsampling matrices, $R$ and $S$, as the fusion problem is considered for the first time in this paper. 
We have tested different structures for downsampling matrices $R$ and $S$, and we report here the form for $R$ that shows the best results in practice, while $S$ is obtained in the same way. This structure is a simple one-dimensional downsampling  linear operator, but turns out to perform well in practice. 
Let us illustrate this on the simple example of the frequency downsampling of a matrix $W \in \mathbb{R}^{8 \times 3}$  with a downsampling ratio $d=2$. 
A possible structure  for the matrix $R \in \mathbb{R}^{4 \times 8}_+$  is as follows:
    \begin{equation*} \label{form3_gen}
	R = \left( 
	\begin{array}{cccccccc}
	\underline{r_{11}} &  \underline{r_{12}} & \mathbf{r_{13}}&    0  & 0 & 0 & 0 & 0 \\
	0      &       \mathbf{r_{21}} &   \underline{r_{22}} &  \underline{r_{23}}  & \mathbf{r_{24}} & 0 & 0 & 0 \\
	0      &       0 &              0 &    \mathbf{r_{31}} &   \underline{r_{32}} &  \underline{r_{33}}  & \mathbf{r_{34}} & 0 \\
	0      &       0 &              0 &    0  & 0 & \mathbf{r_{41}} & \underline{r_{42}} &  \underline{r_{43}}
	\end{array} 
	\right),
    \end{equation*}
    This downsampling matrix $R$ performs a weighted arithmetic mean over a set of rows of the matrix it is applied on; here, $W \in \mathbb{R}^{8 \times 3}_+$ is downsampled as 
    $RW \in \mathbb{R}^{4 \times 3}_+$. 
    The structure of the matrix $R$ relies on two parameters: $d$ and $f$. The parameter $d$ corresponds to the downsampling ratio. Each row of $R$ has at least $d$ non-zero values that correspond to the rows in $W$ that are combined to form the rows of $RW$; see the underlined entries of $R$ above. 
    The parameter $f$ controls the overlap between the linear combinations of the rows of $W$.   
    In the example above, $f=1$ and one positive value is added to the left and the right end of the $d$ non-zero entries corresponding to the downsampling parameter; 
    see the bold entries in matrix $R$ above.
    These positive values allow an overlap (or coupling) within the downsampling process. If we consider two consecutive frequency bins that result from a downsampling operation, it is reasonable to consider that they share common frequency bins in the original frequency space. We imposed $f \leq \nicefrac{d}{2}$ to avoid too much non-physical coupling. This limitation is also based on numerical experiments that show a degradation of the results when $f$ exceeds $\nicefrac{d}{2}$. 
    When $f=0$, the downsampling matrix $R$ performs a weighted arithmetic mean over $d$ rows without overlapping. 
Note that such downsampling matrices are sparse and nonnegative. 

{When solving~\eqref{MR_betaNMF_model}, we will alternatively update $W$, $H$, and the non-zero entries of $R$ and $S$. As far as we know, this is the first time the matrices $R$ and $S$ are learned simultaneously with the factors $W$ and $H$.}  

\subsection{Scope of this paper} To estimate $V$, $W$ and $H$ from $X$ and $Y$, 
other models \revise{exist}. In particular, for the HSI-MSI fusion, many other approaches have been proposed, e.g.,  
based on tensor decompositions~\cite{Kanatsoulis2018Ma, zhang2018spatial}, 
or based on deep neural networks~\cite{palsson2017multispectral, yang2018hyperspectral}. 
In this paper, we focus on the above linear assumptions and the corresponding coupled NMF model~\eqref{MR_betaNMF_model},  
which have been shown to provide  state-of-the-art results for HSI-MSI fusion; see the survey~\cite{yokoya2017}. {More precisely, we will focus on the use of any  $\beta$-divergence, which has not been before.
As we will see in Section~\ref{sec_audioexp}, $\beta$-divergence for $\beta \neq 2$ allows to obtain improved separation for audio source separation compared to standard NMF algorithms. 
In Section~\ref{sec_numexpimage}, we will show that using the Kullback-Leibler divergence ($\beta=1$) outperforms standard linear models in the presence of Poisson noise.}


\section{Algorithm for MR-$\beta$-NMF}\label{sec_algo}

Most NMF algorithms are based on an iterative scheme that alternatively update $H$ for $W$ fixed and vice versa, and we adopt this approach in this paper. 
The goal in this section is to derive an algorithm to solve  MR-$\beta$-NMF~\eqref{MR_betaNMF_model}.

For $R, S$ and $W$ fixed, let us consider the subproblem in~$H$: 
\begin{equation}\label{subprob_H}
 \underset{H \geq 0}{\text{min}} \;  
 L(H)= D_{\beta}(X\|RWH) + \lambda D_{\beta}(Y\|WHS). 
\end{equation} 
The subproblems in $W$, $R$ and $S$ can be solved similarly.  
To tackle this problem, we follow the standard majorization-minimization (MM) framework \cite{sun2017majorization}. 
{Note however that, because of the sum of the two terms $D_{\beta}(X\|RWH)$ and $D_{\beta}(Y\|WHS)$, the update for $H$ does not follow directly from previous MU derived in the literature. MM algorithms are indeed available for the two terms separately \cite{Fevotte_betadiv}, but a more general auxiliary function needs to be used for the joint problem. The auxiliary function, that we denote $\Bar{L}$, must be a tight upper-bound for the objective $L$ at the current iterate $\tilde{H}$. It is formally defined as follows.}  
\begin{definition}\label{def1}
The function $\bar{L}(H\|\tilde{H}): \Omega \times \Omega \rightarrow \mathbb{R}$ is an auxiliary function for $L\left( H \right):\Omega \rightarrow \mathbb{R}$ at $\tilde{H} \in \Omega$ if the conditions  
$\bar{L}(H\|\tilde{H}) \geq L\left( H \right)$ 
for all $H \in \Omega$ and $\bar{L}(\tilde{H}\|\tilde{H}) = L( \tilde{H} )$ 
are satisfied.
\end{definition}
The optimization problem with $L$ is then replaced by a sequence of simpler problems for which the objective is $\bar{L}$. 
The new iterate $H^{(i+1)}$ is computed by minimizing  the auxiliary function at the previous iterate $H^{(i)}$, either approximately or exactly. 
This guarantees $L$ to decrease at each iteration. 
\begin{lemma}\label{lemmamono}
Let $H, H^{(i)} \geq 0$, and let $\bar{L}$ be an auxiliary function for $L$ at $H^{(i)}$. 
Then $L$ is non-increasing under the update
$H^{(i+1)} = \underset{H \geq 0}{\text{argmin }} \bar{L}(H\|H^{(i)})$. 
\end{lemma}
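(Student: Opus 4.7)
The plan is to establish the claim by a three-step chain of inequalities, drawing entirely on the two defining conditions of an auxiliary function in Definition~\ref{def1} and on the optimality of $H^{(i+1)}$ as a minimizer of $\bar{L}(\cdot\,\|\,H^{(i)})$. The monotonicity of a majorization-minimization step is a purely abstract consequence of these conditions, so no additional structural assumption on $L$ or $\bar{L}$ (e.g.\ smoothness, convexity) is required for this lemma.

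Concretely, I would proceed as follows. First, I would evaluate the upper-bound condition of Definition~\ref{def1} at the feasible point $H = H^{(i+1)} \geq 0$ to obtain
\[
L(H^{(i+1)}) \;\leq\; \bar{L}(H^{(i+1)} \,\|\, H^{(i)}).
\]
Second, since $H^{(i+1)} \in \argmin_{H \geq 0} \bar{L}(H \,\|\, H^{(i)})$ and $H^{(i)} \geq 0$ is itself a feasible competitor, the definition of argmin gives
\[
\bar{L}(H^{(i+1)} \,\|\, H^{(i)}) \;\leq\; \bar{L}(H^{(i)} \,\|\, H^{(i)}).
\]
Third, I would apply the tightness condition $\bar{L}(H^{(i)} \,\|\, H^{(i)}) = L(H^{(i)})$. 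Concatenating these three relations yields
\[
L(H^{(i+1)}) \;\leq\; \bar{L}(H^{(i+1)} \,\|\, H^{(i)}) \;\leq\; \bar{L}(H^{(i)} \,\|\, H^{(i)}) \;=\; L(H^{(i)}),
\]
which is exactly the required non-increasing property.

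There is essentially no obstacle in this particular lemma: the entire argument is a classical ``sandwich'' estimate and takes only a few lines. The real difficulty hidden behind the statement lies not here but in the \emph{forthcoming construction} of a concrete auxiliary function $\bar{L}$ for the coupled objective $D_{\beta}(X\|RWH) + \lambda D_{\beta}(Y\|WHS)$, which must simultaneously be tight at $H^{(i)}$, globally majorize $L$, and admit a closed-form minimizer so that the resulting update reduces to a tractable multiplicative rule. That construction is where the novelty of the paper resides, whereas Lemma~\ref{lemmamono} simply justifies the MM skeleton on top of which those updates will be built.
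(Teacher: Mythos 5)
Your proposal is correct and follows the same argument as the paper: the chain $L(H^{(i+1)}) \leq \bar{L}(H^{(i+1)}\|H^{(i)}) \leq \bar{L}(H^{(i)}\|H^{(i)}) = L(H^{(i)})$ is exactly the paper's proof, merely written in the opposite order. No gaps.
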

\begin{proof}
By definition, 
$L(H^{(i)}) = \bar{L}(H^{(i)}\|H^{(i)}) 
\geq 
\underset{H}{\text{min }} \bar{L}(H\|H^{(i)}) 
= \bar{L}(H^{(i+1)}\|H^{(i)})\geq L(H^{(i+1)})$. 
\end{proof}

The most difficult part in using the majorization-minimization framework is to design an auxiliary function that is easy to optimize. Usually such auxiliary functions are separable (that is, there is no interaction between the variables so that each entry of $H$ can be updated independently) and convex. We will construct an auxiliary function for $L(H)$ from \eqref{subprob_H} by a positive linear combination of two auxiliary functions, one for each term of $L(H)$.

\subsubsection{Separable auxiliary function for the first term of $L(H)$}\label{subsec_aux1}

The function $D_{\beta}(X\|RWH)$ separates into $\sum_{n}D_{\beta}(x_{n}\|RWh_{n})$, where $x_{n}$ and $h_{n}$ are the $n$th column of $X$ and $H$ respectively. 
Therefore we only consider the optimization over one specific column $x$ of $X$ and $h$ of $H$. To simplify notation, we denote the current iterate as $\tilde{h}$. 
We now use the separable auxiliary function presented in \cite{Fevotte_betadiv} which consists in majorizing the convex part of the $\beta$-divergence using Jensen's inequality and majorizing the concave part by its tangent (first-order Taylor approximation). 
The $\beta$-divergence can be expressed as the sum of a convex, concave, and constant part, such that:
  \begin{equation*}
  d_{\beta}(x\|y)= \check{d}_{\beta}(x\|y)+\hat{d}_{\beta}(x\|y)+\bar{d}_{\beta}(x\|y), 
  \end{equation*}
where $\check{d}$ is convex function of $y$, $\hat{d}$ is a concave function of $y$ and $\bar{d}$ is a constant of $y$, see \cite{Fevotte_betadiv} for the definition of these terms for different values of $\beta$.

By denoting $RW$ by $P$ and $RW\tilde{h}$ by $\tilde{x}$ with entries $\left[RW\tilde{h}\right]_{f}=\tilde{x}_{f}$ for $f \in \left[1,F_{X} \right]$, the auxiliary function for  $\sum_{f}d_{\beta}(x_{f}\|\left[Ph\right]_{f})$ at $\tilde{h}$ is given by:
\begin{equation}\label{auxfun_term1}
\begin{aligned}
G_{X}(h\|\tilde{h})&=\sum_{f}^{F_{X}}\left[\sum_{k} \frac{p_{fk} \tilde{h_{k}}}{\tilde{x}_{f}}\check{d}_{\beta}(x_{f}\|\tilde{x}_{f}\frac{h_{k}}{\tilde{h_{k}}})\right] +\bar{d}_{\beta}(x_{f}\|\tilde{x}_{f})  \\
&+\left[\hat{d}^{'}_{\beta}(x_{f}\|\tilde{x}_{f})\sum_{k}p_{fk}(h_{k}-\tilde{h_{k}})+\hat{d}_{\beta}(x_{f}\|\tilde{x}_{f}) \right] . 
\end{aligned}
\end{equation}  
Therefore the function 
\begin{equation}\label{auxfun_term1fin}
G_{X}(H\|\tilde{H})=\sum_{n}G_{X}(h_{n}\|\tilde{h}_{n})
\end{equation} 
is an auxiliary function (convex and separable) for $D_{\beta}(X\|RWH)$ at $\tilde{H}$ where $G_{X}(h\|\tilde{h})$ is given by \eqref{auxfun_term1}.

\subsubsection{Separable auxiliary function for the second term of $L(H)$}\label{subsec_aux2} 

Let $\tilde{y}_{fn}=\left[ WHS\right]_{fn}$ and let us use a result from \cite{Fevotte_betadiv}: 
\begin{equation}\label{auxfun_term2}
\begin{aligned}
G_{Y}(H\|\tilde{H})&=\sum_{f,n}\left[\sum_{k,j} \frac{(w_{fk}s_{jn})\tilde{h}_{kj} }{\tilde{y}_{fn}}\check{d}_{\beta}(y_{fn}\|\tilde{y}_{fn}\frac{h_{kj}}{\tilde{h_{kj}}})\right] \\
&+\bar{d}_{\beta}(y_{fn}\|\tilde{y}_{fn}) +\hat{d}_{\beta}(y_{fn}\|\tilde{y}_{fn})  \\
&+\hat{d}^{'}_{\beta}(y_{fn}\|\tilde{y}_{fn})\sum_{k,j}w_{fk}(h_{kj}-\tilde{h_{kj}})s_{jn} . 
\end{aligned}
\end{equation} 
In \cite{Fevotte_betadiv}, the authors show that \eqref{auxfun_term2} is an auxiliary function (separable and convex) to $D_{\beta}(Y\|WHS)$ at $\tilde{H}$ 
: by construction $G_{Y}(H\|\tilde{H})$ is an upper-bound to $D_{\beta}(Y\|WHS)$ at $\tilde{H}$ and is tight when $H=\tilde{H}$.

\subsubsection{Auxiliary function for multi-resolution $\beta$-NMF}
Based on the auxiliary functions presented in Sections \ref{subsec_aux1} and \ref{subsec_aux2}, we can directly derive a separable auxiliary function $\bar{F}(H\| \tilde{H})$ for multi-resolution $\beta$-NMF \eqref{subprob_H}.

\begin{lemma}\label{lemma2}
For $H\geq 0$, $\lambda >0$, the function 
\begin{equation*}\label{eq:5}
\begin{aligned}
& \bar{L}(H\|\tilde{H}) 
= G_{X}(H\|\tilde{H}) + \lambda G_{Y}(H\|\tilde{H}), 
\end{aligned}
\end{equation*}
where $G_{X}$ is given by \eqref{auxfun_term1fin} and 
$G_{Y}$ by \eqref{auxfun_term2}, 
is a convex and separable auxiliary function for  
$L(H)=D_{\beta}(X\|RWH) + \lambda D_{\beta}(Y\|WHS)$. 
\end{lemma}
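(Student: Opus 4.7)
The plan is to prove Lemma~\ref{lemma2} by verifying the three requirements directly from the corresponding properties already established for $G_X$ and $G_Y$ in Sections~\ref{subsec_aux1} and~\ref{subsec_aux2}. Namely, I would check (i)~the majorization inequality $\bar{L}(H\|\tilde{H}) \geq L(H)$ for all admissible $H$, (ii)~the tightness identity $\bar{L}(\tilde{H}\|\tilde{H}) = L(\tilde{H})$, (iii)~convexity in $H$, and (iv)~separability across the entries of $H$.

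For (i) and (ii), I would start from the already-established facts that $G_{X}(H\|\tilde{H})$ is an auxiliary function for $D_{\beta}(X\|RWH)$ and $G_{Y}(H\|\tilde{H})$ is an auxiliary function for $D_{\beta}(Y\|WHS)$, both at $\tilde{H}$. These give
\[
G_{X}(H\|\tilde{H}) \geq D_{\beta}(X\|RWH), \qquad G_{Y}(H\|\tilde{H}) \geq D_{\beta}(Y\|WHS),
\]
with equality at $H=\tilde{H}$. Multiplying the second inequality by $\lambda > 0$ and adding yields $\bar{L}(H\|\tilde{H}) \geq L(H)$ for all $H\geq 0$, and evaluating at $H = \tilde{H}$ produces the required equality, since tightness is preserved under nonnegative linear combinations.

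For (iii) and (iv), I would invoke that $G_{X}$ and $G_{Y}$ are each convex in $H$ (a property inherited from the Jensen-plus-tangent construction of \cite{Fevotte_betadiv} reviewed in Sections~\ref{subsec_aux1}--\ref{subsec_aux2}). Since $\lambda > 0$, $\bar{L}$ is a conic combination of convex functions, hence convex. For separability, each of $G_{X}$ and $G_{Y}$ decomposes as a sum over the scalar entries $h_{kj}$, and summing two such decompositions with a positive weight preserves the decomposition entry by entry, so $\bar{L}$ is separable in $H$ as well.

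The argument is essentially bookkeeping: no new calculation is needed once the two single-term results are in hand, because all four properties (upper bound, tightness at $\tilde{H}$, convexity, separability) are preserved by nonnegative linear combinations. The only point requiring any attention is verifying that the two separability structures are compatible, i.e.\ that both $G_X$ and $G_Y$ break up into the same scalar-entry summation; this is immediate from the explicit expressions~\eqref{auxfun_term1}--\eqref{auxfun_term1fin} and~\eqref{auxfun_term2}, which are both written as sums over the indices $(k,j)$ of $h_{kj}$.
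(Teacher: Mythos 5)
Your proposal is correct and follows exactly the route of the paper, which simply states that the lemma "follows directly" from the two single-term auxiliary functions \eqref{auxfun_term1fin} and \eqref{auxfun_term2}; you have merely spelled out the standard bookkeeping (upper bound, tightness, convexity, and separability are all preserved under conic combination) that the paper leaves implicit.
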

\begin{proof} 
This follows directly from \eqref{auxfun_term1fin} and \eqref{auxfun_term2}.
\end{proof}

\subsubsection{Multiplicative updates for MR-$\beta$-NMF}

Given the convexity and the separability of the auxiliary function, the optimum is obtained by canceling the gradient. The derivative of the auxiliary function $\bar{L}(H\|\tilde{H})$ with respect to a specific coefficient $h_{kz}$, with index $z$ identifying the same column specified by $n$ in \eqref{auxfun_term1} and specified by $j$ in \eqref{auxfun_term2}, is given by:
\begin{equation}\label{eq:6}
  \begin{aligned}
  \nabla_{h_{kz}}\bar{L}& = \nabla_{h_{kz}}G_{X}(H\|\tilde{H}) + \lambda \nabla_{h_{kz}} G_{Y}(H\|\tilde{H})\\
  & = \sum_{f}^{F_{X}} p_{fk} \left[\check{d^{'}}_{\beta} \left(x_{fz} \|\tilde{x}_{fz} \frac{h_{kz}}{\tilde{h}_{kz}}  \right)   + \hat{d}^{'}_{\beta}(x_{fz}\|\tilde{x}_{fz}) \right]    \\
  & + \lambda \sum_{f}^{F_{Y}} \sum_{n}^{N_{Y}} w_{fk}s_{zn} [\check{d^{'}}_{\beta} \left(y_{fn} \|\tilde{y}_{fn} \frac{h_{kz}}{\tilde{h}_{kz}}  \right) \\
  & +  \hat{d}^{'}_{\beta}(y_{fn}\|\tilde{y}_{fn}) ].
  \end{aligned}
\end{equation}
For example, for $\beta=1$, \eqref{eq:6} becomes:
\begin{equation}\label{eq:7}
  \begin{aligned}
  \nabla_{h_{kz}}\bar{L}& = \sum_{f}^{F_{X}} p_{fk} \left[1- \frac{x_{fz} \tilde{h}_{kz} \tilde{x}^{-1}_{fz} }{h_{kz}} \right] \\
  & + \lambda \sum_{f}^{F_{Y}} \sum_{n}^{N_{Y}} w_{fk}s_{zn} \left[1- \frac{y_{fn} \tilde{h}_{kz} \tilde{y}^{-1}_{fn} }{h_{kz}} \right].
  \end{aligned}
\end{equation}
\begin{center}
\begin{table*}[ht!]
\begin{center}
\caption{Multiplicative updates 
for MR-$\beta$-NMF~\eqref{MR_betaNMF_model}. }
\label{table:beta-MU}
\begin{tabular}{l}
$\hspace{2cm}  H = \tilde{H} \odot \left( \frac{ \left[ W^{T}\left( R^{T}\left( \left(RW\tilde{H}\right)^{.(\beta-2)}\odot X\right) + \lambda \left( \left(W\tilde{H}S\right)^{.(\beta-2)}\odot Y\right) S^{T}  \right) \right] }{ \left[ W^{T}\left(R^{T}\left(RW\tilde{H}\right)^{.(\beta-1)}+\lambda \left(W\tilde{H}S\right)^{.(\beta-1)}S^{T} \right) \right] }      \right)^{.\gamma(\beta)}$,  \\  
$\hspace{2cm}  W = \tilde{W} \odot \left( \frac{ \left[ \left( R^{T}\left( \left(R\tilde{W}H\right)^{.(\beta-2)}\odot X\right) + \lambda \left( \left(\tilde{W}HS\right)^{.(\beta-2)}\odot Y\right) S^{T}  \right) H^{T} \right] }{ \left[ \left(R^{T}\left(R\tilde{W}H\right)^{.(\beta-1)}+\lambda \left(\tilde{W}HS\right)^{.(\beta-1)}S^{T} \right) H^{T} \right] }      \right)^{.\gamma(\beta)}$,  \\   
$\hspace{2cm}  S = \tilde{S} \odot \left( \frac{\left[ H^{T} \left( W^{T} \left(\left(WH\tilde{S}\right)^{.(\beta-2)}\odot Y \right) \right) \right]}{\left[ H^{T} \left( W^{T} \left(WH\tilde{S}\right)^{.(\beta-1)} \right) \right]} \right)^{.\gamma(\beta)}$,  
$R = \tilde{R} \odot \left( \frac{\left[ \left(  \left(\left(\tilde{R}WH\right)^{.(\beta-2)}\odot X \right)H^{T} \right) W^{T} \right]}{\left[  \left( \left(\tilde{R}WH\right)^{.(\beta-1)} H^{T} \right) W^{T} \right]} \right)^{.\gamma(\beta)}$,  \\ 

\normalsize 
where $A \odot B$ 
(resp.\@ $\nicefrac{\left[ A \right]}{\left[ B \right]}$) is the Hadamard product 
(resp.\@ division) between $A$ and $B$, $A^{(.\alpha)}$ is the element \\ 
\normalsize
-wise $\alpha$ exponent of $A$, $\gamma(\beta)=\frac{1}{2-\beta}$ for $\beta < 1$, $\gamma(\beta)=1$ for $\beta \in \left[1,2\right]$ and $\gamma(\beta)=\frac{1}{\beta-1}$ for $\beta >2$ \cite{Fevotte_betadiv}.  
\normalsize
\end{tabular} 
\end{center}
\end{table*}
\end{center} 
Setting \eqref{eq:7} to zero, we get the following closed-form solution for the  $h_{kz}$ coefficient of $H$: 
\begin{equation}\label{eq:8}
          h_{kz}=\tilde{h}_{kz}\frac{\sum_{f}^{F_{X}} p_{fk} x_{fz} \tilde{x}^{-1}_{fz} + \lambda \sum_{f}^{F_{Y}} \sum_{n}^{N_{Y}} w_{fk}s_{zn} y_{fn} \tilde{y}^{-1}_{fn} }{\sum_{f}^{F_{X}} p_{fk}  + \lambda \sum_{f}^{F_{Y}} \sum_{n}^{N_{Y}} w_{fk}s_{zn} } . 
\end{equation} 
The generalization of the closed-form solution \eqref{eq:8} for any $\beta$ for $H$ is given in Table~\ref{table:beta-MU} in matrix forms. 

{Table~\ref{table:beta-MU} also gives the MU for $W$, $R$ and $S$. 
They are obtained exactly in the same was as for $H$. 
For the update of $S$ that should minimize $D_{\beta}(Y \| WHS)$, use the update of $H$ for the term $D_{\beta}(X \| RWH)$  (that is, taking $\lambda = 0$) where 
$X$ is replaced by $Y$, 
$R$ by $W$, $W$ by $H$, and $H$ by $S$. 
For the update of $W$, use the invariance of \eqref{MR_betaNMF_model} by transposition, that is, 
\[
D_{\beta}(X \| RWH) = D_{\beta}(X^\top \| H^\top W^\top R^\top ) 
\] 
and 
\[ 
 D_{\beta}(Y\|WHS) 
= 
 D_{\beta}(Y^\top\|  S^\top H^\top W^\top). 
\] 
For the update of $R$ that should minimize 
$D_{\beta}(X \| R W H)$, use the update of $H$ for the term $D_{\beta}(X^\top \| H^\top W^\top R^\top )$  (that is, taking $\lambda = 0$) where $X$ 
is replaced by $X^\top$, 
$R$ by $H^\top$, $W$ by $W^\top$, and $H$ by $R^\top$. 
}

\begin{theorem}
The updates provided in Table~\ref{table:beta-MU} are guaranteed to decrease the objective function of~\eqref{MR_betaNMF_model}.  
\end{theorem}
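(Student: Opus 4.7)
The overall strategy is a straightforward block coordinate application of the monotonicity guarantee in Lemma~\ref{lemmamono}. The objective in \eqref{MR_betaNMF_model} depends on four blocks $(H, W, R, S)$; I will argue that each of the four updates in Table~\ref{table:beta-MU}, applied with the other three blocks frozen, does not increase the objective. Summing the four non-increase inequalities along a sweep then yields the claimed monotonicity.

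I would treat the update of $H$ first, since the others are obtained from it by symmetry. With $W$, $R$, $S$ fixed, Lemma~\ref{lemma2} already supplies a separable convex (in the sense of \cite{Fevotte_betadiv}) auxiliary function $\bar{L}(H\|\tilde{H}) = G_X(H\|\tilde{H}) + \lambda G_Y(H\|\tilde{H})$ for $L(H) = D_\beta(X\|RWH) + \lambda D_\beta(Y\|WHS)$ at the current iterate $\tilde{H}$. It then suffices to show that the matrix update for $H$ listed in Table~\ref{table:beta-MU} is exactly a minimizer of $\bar{L}(\cdot\|\tilde{H})$ over the nonnegative orthant: this is precisely what the stationarity computation \eqref{eq:6}--\eqref{eq:8} accomplishes in closed form for $\beta=1$, and the general $\beta$ formula is obtained by the same cancellation of $\nabla_{h_{kz}} \bar{L} = 0$, rewritten in matrix form with the Hadamard notation. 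By Lemma~\ref{lemmamono}, this decreases $L$.

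For the three remaining blocks I would simply invoke the invariances spelled out after \eqref{eq:8}. The update of $S$ reuses the $H$-update applied to the single-term subproblem $D_\beta(Y\|WHS)$ via the substitution $(X,R,W,H)\leftarrow (Y,W,H,S)$ with $\lambda=0$; the update of $R$ uses the transposition identity $D_\beta(X\|RWH) = D_\beta(X^\top\|H^\top W^\top R^\top)$ together with the same $H$-update template. The update of $W$ is obtained analogously by applying the $H$-update to both transposed terms at once, since the coupled structure on $W$ is symmetric to that on $H$. In each case, Lemma~\ref{lemma2} (possibly composed with transposition) delivers a valid auxiliary function, and the multiplicative formula in Table~\ref{table:beta-MU} is its zero-gradient minimizer, so Lemma~\ref{lemmamono} again applies.

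The main obstacle, and the only place where one must be careful, is the exponent $\gamma(\beta)$. When $\beta\in[1,2]$ the $\beta$-divergence is convex in its second argument, $\hat{d}_\beta$ vanishes, and one really does get an unconstrained closed-form minimizer of $\bar{L}$, i.e.\ $\gamma(\beta)=1$. Outside this range, the derivation of \cite{Fevotte_betadiv} replaces Jensen's inequality / tangent majorization by a further majorization that yields the corrected exponent $\gamma(\beta)=1/(2-\beta)$ for $\beta<1$ and $\gamma(\beta)=1/(\beta-1)$ for $\beta>2$, preserving the majorization property of $\bar{L}$ and hence the monotonicity of the update. I would state this invocation explicitly, referring to \cite{Fevotte_betadiv}, so that the same argument carries through uniformly for the four blocks and for every $\beta\in\mathbb{R}$, completing the proof.
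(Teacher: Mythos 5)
Your proposal is correct and follows essentially the same route as the paper: combine the auxiliary function of Lemma~\ref{lemma2} with the monotonicity guarantee of Lemma~\ref{lemmamono}, identify the Table~\ref{table:beta-MU} update for $H$ as the closed-form minimizer obtained from \eqref{eq:6}--\eqref{eq:8}, and transfer it to $W$, $R$, $S$ via the substitutions and transposition identities stated after \eqref{eq:8}. Your explicit remark on the exponent $\gamma(\beta)$ for $\beta \notin [1,2]$, handled by the majorization of \cite{Fevotte_betadiv}, is a useful detail the paper's one-line proof leaves implicit, but it does not change the approach.
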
 
\begin{proof}
This follows from Lemmas~\ref{lemmamono} and~\ref{lemma2}, and from the derivations above so that Table~\ref{table:beta-MU} provides the closed-form update of the auxiliary function of Lemma~\ref{lemma2}.  
\end{proof}

Algorithm~\ref{MRbetaNMF} summarizes our method to tackle \eqref{MR_betaNMF_model} which, for simplicity, will be referred to as MR-$\beta$-NMF. 
It consists in two optimization loops: 

\begin{itemize}
    \item Loop 1: $W$ and $H$ are alternatively updated with downsampling matrices $R$ and $S$ kept fixed to obtain good estimates for $W$ and $H$. The updates are performed for a maximum number of iterations, MAXITERL1. 
    \item Loop 2: $W$, $H$, $S$ and $R$ are alternatively updated so that the algorithm learns the downsampling matrices. The maximum number of iterations for loop 2 is MAXITERL2. 
\end{itemize}
For the HSI-MSI fusion problem, the matrices $R$ and $S$ are usually known and therefore the parameter MAXITERL2 is set to zero. 
In this paper, the second optimization loop is considered only for the audio spectral unmixing application since the matrices $R$ and $S$ are unknown; see Section~\ref{sec_numexpaudio}. 


After $W$ and $H$ are updated, we normalize $W$ such that $\|W(:,k)\|_{1}=1$ for all $k$, and we normalize $H$ accordingly so that $WH$ remains unchanged. This normalization is commonly used for NMF-based methods and is mainly performed to remove the scaling degree of freedom. 
As a convergence condition, we consider the relative change ratio of the cost function $L$ from \eqref{MR_betaNMF_model}, namely 
$|L^{i}-L^{i+1}| \leq \kappa L^{i}$ where $\kappa$ is a given threshold in $(0,1)$, 
and $i$ is the iteration counter. 
We also stop the optimization process if the number of iterations exceeds the predefined maximum number of iterations.

\algsetup{indent=2em}
\begin{algorithm}[ht!]
\caption{Multiplicative updates for MR-$\beta$-NMF \label{MRbetaNMF}}
\begin{algorithmic}[1] 
\REQUIRE A matrix $X \in \mathbb{R}_{+}^{F_{X} \times N_{X}}$, a matrix $Y \in \mathbb{R}_{+}^{F_{Y} \times N_{Y}}$, an initialization $H \in \mathbb{R}^{K \times N_{X}}_+$, an initialization $W  \in \mathbb{R}_{+}^{F_{Y} \times K}$, a matrix $R \in \mathbb{R}_{+}^{F_{X} \times F_{Y}}$, a matrix $S \in \mathbb{R}_{+}^{N_{X} \times N_{Y}}$, a factorization rank $K$, a maximum number of iterations MAXITERL1, a maximum number of iterations MAXITERL2, a threshold $0<\kappa \ll 1$, and a weight $\lambda>0$
\ENSURE A rank-$K$ NMF $(W,H)$ of $V \approx WH$ with $W \geq 0$ and $H \geq 0$, and matrices $R$ and $S$ such that $X \approx RWH$ and $Y \approx WHS$. 
    \medskip  
\STATE \emph{\% Loop 1}
\STATE $i \leftarrow 0$, $L^{0} = 1$, $L^{1} = 0$. 
\WHILE{$i <$  MAXITERL1 and $\left| \frac{L^{i}-L^{i+1}}{L^{i}} \right| > \kappa$ }
	\STATE \emph{\% Update of matrices $H$ and $W$}
	\STATE Update $H$ and $W$ sequentially; see Table~\ref{table:beta-MU}
	\STATE Compute the objective function $L^{i+1}$ 
	\STATE $(W,H) \leftarrow \text{normalize}\left( W , H \right)$, $i \leftarrow i+1$
\ENDWHILE
\STATE \emph{\% Loop 2}
\STATE $i \leftarrow 0$
\WHILE{$i <$  MAXITERL2 and $\left| \frac{L^{i}-L^{i+1}}{L^{i}} \right| > \kappa$ }
	\STATE \emph{\% Update of matrices $H, W, S$ and $R$}
	\STATE Update $H, W, S, R$ sequentially; see Table~\ref{table:beta-MU}
	\STATE Compute the objective function $L^{i+1}$  
	\STATE $(W,H) \leftarrow \text{normalize}\left( W , H \right)$, $i \leftarrow i+1$ 
\ENDWHILE

\end{algorithmic}  
\end{algorithm} 

It can be verified that the computational complexity of the MR-$\beta$-NMF is asymptotically equivalent to the standard MU for $\beta$-NMF, that is, it requires $\mathcal{O}\left(FNK \right)$ operations per iteration.  


{
\paragraph*{Choice of $\beta$} In practice, a crucial issue is to choose the data fitting term; in our case the value of $\beta$. This is a non-trivial task, and many papers have addressed this issue. \revise{Without} prior knowledge, a standard approach is to use cross-validation, that is, hide a subset of the entries and compare the performance of the different models to predict the hidden entries; see, e.g., the discussion in \cite[Section 5.1.2]{Gillis2020}. 
} 

\section{Numerical experiments on audio datasets} \label{sec_audioexp}

In this section, we perform numerical experiments to validate the effectiveness of MR-$\beta$-NMF on two synthetic audio datasets. 

\subsection{Experimental setup and evaluation}\label{sec_testsetupaudio}

\subsubsection{Data}\label{dataset_des}
The proposed technique for joint factorization of amplitude audio spectrograms is applied to two synthetic audio samples. A dedicated test procedure is presented in Section \ref{test_setup} 
in order to evaluate the performance of MR-$\beta$-NMF based on quantitative criteria detailed in subsection \ref{sec_perfeval}. 
The first audio sample is the first measure of ``Mary had a little lamb" and composed of three notes; $E_{4}$, $D_{4}$ and $C_{4}$. The signal is 5 seconds long and has a sampling frequency $f_{s}=44100$Hz yielding $T=220500$ samples. 

\begin{figure}[h!]
\centering 
        \includegraphics[width=0.4\textwidth]{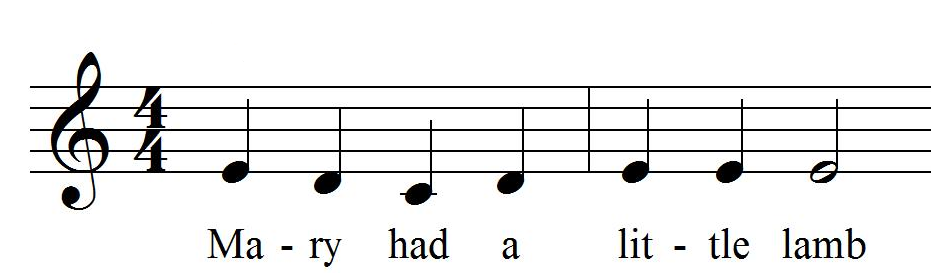}
\caption{
Musical score of ``Mary had a little lamb" (dataset 1).
}
\label{fig:mary_representations}
\end{figure} 

The second audio sample, inspired from \cite{fevotte2009nonnegative}, is a piano sequence played from the score given in Figure \ref{fig:audio2_representations}. The piano sequence is composed of four notes; $D_{4}$, $F_{4}$, $A_{4}$ and $C_{5}$, played all at once in the first measure and then played by pairs in all possible combinations in the remaining measures. The signal is 14.6 seconds long and has a sampling frequency $f_{s}=44100$Hz yielding $T=643817$ samples. 

\begin{figure}[h!]
\centering 
        \includegraphics[width=0.45\textwidth]{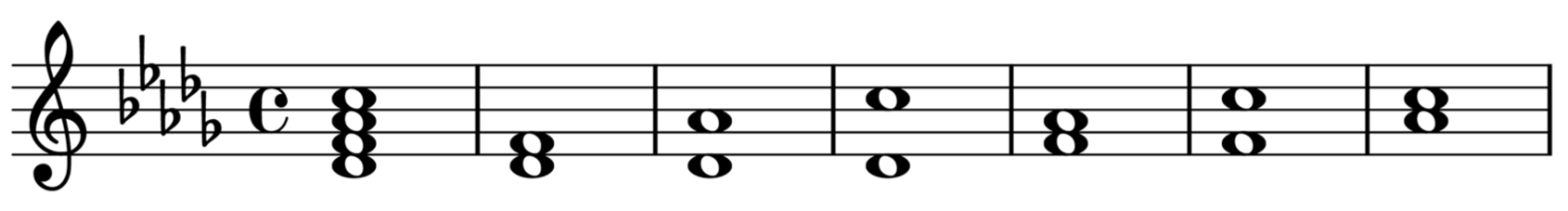}
\caption{
Musical score of the second audio sample (dataset 2).
}
\label{fig:audio2_representations}
\end{figure} 

The music samples have been generated with a professional audio software called Sibelius based on the musical score shown in Figures \ref{fig:mary_representations} and \ref{fig:audio2_representations}. 

\subsubsection{Experimental comparison}\label{test_setup}

This section describes the test procedure elaborated to evaluate the quality of the results obtained with MR-$\beta$-NMF~\eqref{MR_betaNMF_model} that jointly factorizes two audio spectrograms $X$ and $Y$. In the following, matrices $W$ and $H$ stand for the solutions computed with Algorithm~\ref{MRbetaNMF} that solves MR-$\beta$-NMF~\eqref{MR_betaNMF_model}. 
We aim at showing that the factor $W$ has a high frequency resolution whereas the matrix $H$ has a high temporal resolution. To achieve this goal, we compare $W$ to $W_{Y}$ computed with a baseline $\beta$-NMF approach that factorizes the high frequency spectrogram $Y$ only. The baseline $\beta$-NMF applied on $Y$ solves the following optimization problem: 
    \begin{equation}\label{baseline_betaNMF_model_HFR}
    \begin{aligned}
    & \underset{W_{Y}\geq 0, H_{Y}\geq 0}{\text{min}}
    & D_{\beta}(Y\|W_{Y}H_{Y}) . 
    \end{aligned}
    \end{equation} 
Due to the trade-off  between the  frequency  and  temporal  resolutions, the activation matrix  $H_{Y}$ shows a low temporal resolution. 
To compare the accuracy of the solutions $W$ and $W_{Y}$, we need to have access to an oracle matrix $W_{\#}$ that is the reference for the comparison. For instance, for the dataset 1, each column of $W_{\#}$ is supposedly the "true" spectral signature of each of the three notes, namely $E_{4}$, $D_{4}$ and $C_{4}$. We estimated $W_{\#}$ as follows:
\begin{itemize}
    \item We synthetically generate three audio signals and each one contains the sequence of one note in particular. 
    \item Based on the three audio signals, we generate three amplitude spectrograms that have high frequency resolution with the same window size as the one used to generate~$Y$. 
    \item For each amplitude spectrogram, we perform a rank-1 NMF. The resulting  $F_Y$-dimensional vectors are concatenated to form the oracle matrix $W_{\#}$.
\end{itemize}
We show the accuracy of $H$ with a similar procedure; $H$ is compared to an activation matrix $H_{X}$ obtained by solving 
\begin{equation}\label{baseline_betaNMF_modelLFR}
    \begin{aligned}
    & \underset{W_{X}\geq 0, H_{X}\geq 0}{\text{min}}
    & D_{\beta}(X\|W_{X}H_{X}), 
    \end{aligned}
    \end{equation} 
    using multiplicative updates. 
The oracle matrix $H_{\#}$, that is, the reference for the comparison, is computed by performing three independent rank-1 NMF on three amplitude spectrograms that have high temporal resolution, all generated with the same window size as the one used to generate $X$.

\subsubsection{Performance Evaluation}\label{sec_perfeval}
This section presents the qualitative criteria  for  evaluating  the  performance  of the solutions obtained with Algorithm \ref{MRbetaNMF}. We compute the following measures of reconstruction. \\ 
    \noindent $\bullet$ \textit{Activation matrices}: in order to avoid the scaling and permutation ambiguities inherent to the considered NMF models, 
    we first normalize in L-1 norm the rows of the \revise{activation} matrices $H$ and solve an assignment problem w.r.t.\ the oracle matrix $H_{\#}$.
    The quality of the activation matrix $H$ is compared to $H_{X}$ w.r.t.\ $H_{\#}$ by computing the following signal-to-noise ratios (SNR): for all~$k$, 
    \begin{equation}\label{error_activations_MRHR}
    \begin{aligned}
    SNR_{H_{k}}=20 \log_{10} \left( \frac{\|\Bar{H}(k,:)\|_{F}}{\|\Bar{H}(k,:)-\Bar{H}_{\#}(k,:)\|_{F}}\right), 
    \end{aligned}
    \end{equation} 
    where $\Bar{H}(k,:)=\frac{H(k,:)}{\|H(k,:) \|_{1}}$ and $\|H(k,:)\|_{1}=\sum_{j}|H(k,j)|$, and 
    \begin{equation}\label{error_activations_LFRHR}
    \begin{aligned}
    SNR_{H_{X,k}}=20 \log_{10} \left( \frac{\|\Bar{H}_{X}(k,:)\|_{F}}{\|\Bar{H}_{X}(k,:)-\Bar{H}_{\#}(k,:)\|_{F}}\right). 
    \end{aligned}
    \end{equation} 
    The higher the SNRs \eqref{error_activations_MRHR} and \eqref{error_activations_LFRHR}, the better is the estimation for the activation matrix. \\
    \noindent $\bullet$ \textit{Source  matrices}:
    The quality of the source matrix $W$ is evaluated in the same fashion, except that the normalization is performed by columns.  

\subsection{Results}\label{sec_numexpaudio}

\begin{center}
\begin{table*}[h!] 
\begin{center}
\caption{Comparison of MR-$\beta$-NMF with baseline $\beta$-NMF in terms of SNR on the activations and the sources with respect to true factors on the dataset 1. The table reports the average, standard deviation and the best SNR over 100 random initializations for $W$ and $H$. 
Bold numbers indicate the highest SNR.}   
\label{tab:dtm}
\begin{tabular}{|c|cccc|cccc|} 
\hline
Note &  \multicolumn{4}{|c|}{Activation SNRs (dB)} &  
\multicolumn{4}{|c|}{Basis SNRs (dB)}
 \\
 &  \multicolumn{2}{|c|}{$SNR_{H_{k}}$} & \multicolumn{2}{|c|}{$SNR_{H_{X,k}}$} & \multicolumn{2}{|c|}{$SNR_{W_{k}}$} & \multicolumn{2}{|c|}{$SNR_{W_{Y,k}}$} \\
 & average $\pm$ std & best & average $\pm$ std & best & average $\pm$ std & best & average $\pm$ std & best \\
 \hline 
$C_{4}$ & 12.33 $\pm$ 0.17 & \textbf{12.74} & 3.89  $\pm$ 8.99  &  12.19 & 21.35 $\pm$ 1.77 & \textbf{22.66} & 7.95 $\pm$ 7.84 & 12.38\\ 
$D_{4}$ & 14.50 $\pm$ 0.08 & \textbf{14.62} & 8.57  $\pm$ 6.44  &  14.38 & 21.25 $\pm$ 0.35 & \textbf{21.61} & 14.71 $\pm$ 6.06 & 18.23\\ 
$E_{4}$ & 19.68 $\pm$ 0.04 & \textbf{19.82} & 15.28  $\pm$ 5.06   &  19.74 & 22.71 $\pm$ 0.36 & \textbf{23.02} & 19.36 $\pm$ 2.02 & 20.66 \\ 
 \hline  
\end{tabular}
\end{center}
\end{table*}
\end{center} 
\begin{center}
\begin{table*}[h!] 
\begin{center}
\caption{Comparison of MR-$\beta$-NMF with baseline $\beta$-NMF in terms of SNR on the activations and the sources with respect to true factors on the dataset 2. The table reports the average, standard deviation and the best SNR over 100 random \revise{initializations} for $W$ and $H$. 
Bold numbers indicate the highest SNR.}   
\label{tab:dtm_dataset2}
\begin{tabular}{|c|cccc|cccc|} 
\hline
Note &  \multicolumn{4}{|c|}{Activation SNRs (dB)} &  
\multicolumn{4}{|c|}{Sources  SNRs (dB)}
 \\
 &  \multicolumn{2}{|c|}{$SNR_{H_{k}}$} & \multicolumn{2}{|c|}{$SNR_{H_{X,k}}$} & \multicolumn{2}{|c|}{$SNR_{W_{k}}$} & \multicolumn{2}{|c|}{$SNR_{W_{Y,k}}$} \\
 & average $\pm$ std & best & average $\pm$ std & best & average $\pm$ std & best & average $\pm$ std & best \\
 \hline 
$A_{4}$ & 11.98 $\pm$ 0.01 & 12.03 & 12.17  $\pm$ 0.01  &  \textbf{12.17} & 16.24 $\pm$ 0.02 & \textbf{16.43} & 16.29 $\pm$ 0.26 & 16.42\\ 
$C_{5}$ & 9.54 $\pm$ 0.02 & \textbf{9.57} & 9.43  $\pm$ 0.01  &  9.43 & 9.41 $\pm$ 0.02 & \textbf{9.42} & 8.61 $\pm$ 0.72 & 8.73\\ 
$D_{4}$ & 14.81 $\pm$ 0.01 & 14.82 & 14.92  $\pm$ 0.01   & \textbf{14.92} & 16.20 $\pm$ 0.06 & \textbf{16.33} & 15.24 $\pm$ 2.37 & 15.64 \\ 
$F_{4}$ & 11.23 $\pm$ 0.01 & 11.32 & 11.52  $\pm$ 0.01   &  \textbf{11.54} & 16.47 $\pm$ 0.05 & 16.50 & 16.76 $\pm$ 0.99 & \textbf{16.93} \\ 
 \hline  
\end{tabular}
\end{center}
\end{table*}
\end{center} 
In this section, we use the following setting: \\
\noindent $\bullet$ 100 random initializations for $W$ and $H$ for each NMF. \\ 
 \noindent $\bullet$ the window lengths are set to 1024 (23ms) and 4096 (93ms), then the downsampling ratio $d$ is equal to 4. For the generation of $R$ and $S$, the parameter $f$ is set to 2. \\ 
    \noindent $\bullet$ $\beta=1$, and we consider the amplitude spectrograms as the input data. \\
        \noindent $\bullet$ we use $\lambda=1$ in all our experiments.

\subsubsection{Dataset 1: "Mary had a little lamb"}

In this section we report the numerical results obtained after the completion of the test set up presented in section \ref{sec_testsetupaudio}, and using  MAXITERL1=100 and MAXITERL2=400 for Algorithm~\ref{MRbetaNMF}. 

Table \ref{tab:dtm} reports the average SNR, the standard deviation and the best SNR computed for the activations and sources obtained with the models described in Section~\ref{test_setup} over the 100 initializations. 
As it can be observed, activations $H$ are slightly better than activations $H_{X}$, and with a significant smaller standard deviation for each note. The results for the recovered sources  are even more conclusive; MR-$\beta$-NMF outperforms baseline NMF~\eqref{baseline_betaNMF_model_HFR} for which the SNR (best case) can be up to two times larger. 
Moreover, the standard deviations of MR-$\beta$-NMF are significantly lower than those obtained with  baseline NMF~\eqref{baseline_betaNMF_model_HFR}. It appears that the second term in the objective function in \eqref{MR_betaNMF_model} acts as a regularizer so that MR-$\beta$-NMF is more robust to different initializations.

Figure~\ref{fig:basis_1} shows the source  matrices $W_{\#}$, $W$, $W_{Y}$ and $W_{X}$. For more clarity, the frequency range is limited to 2 kHz. This limited range includes all the most significant peaks in terms of magnitude. 
We observe that all the frequency peaks are accurately estimated by MR-$\beta$-NMF for each note. Figure~\ref{fig:basis_1} also integrates the source matrix $W_{X}$ to highlight the impact of 
using baseline NMF~\eqref{baseline_betaNMF_modelLFR} that uses a higher temporal resolution.

\begin{figure}[h!]
\centering 
        \includegraphics[width=0.5\textwidth]{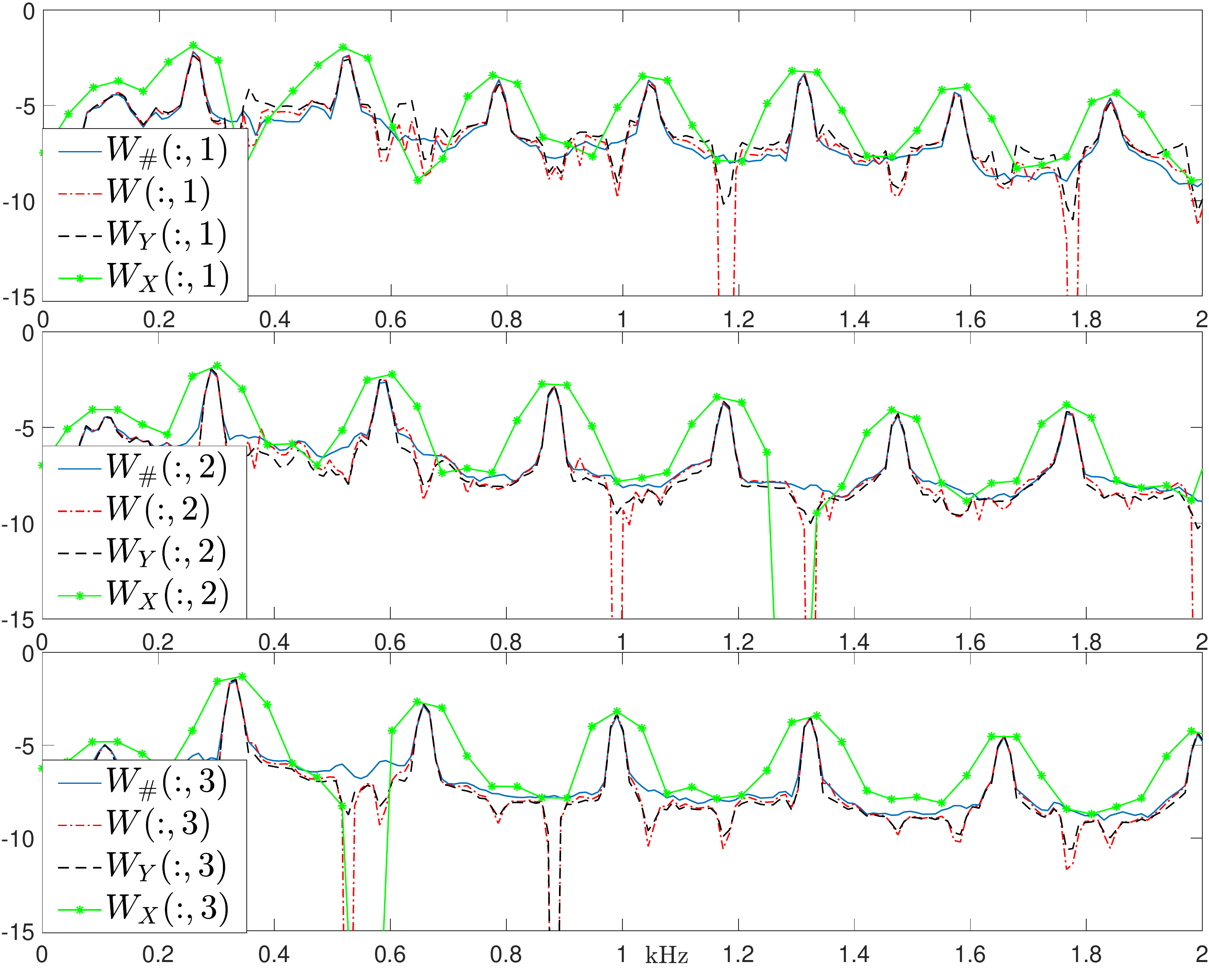}
\caption{
Columns of $W_{\#}$, $W$, $W_{Y}$ and $W_{X}$ in semi-log scale. Top, middle and bottom sub-figures show the spectral content respectively for $C_{4}$, $D_{4}$ and $E_{4}$.
}
\label{fig:basis_1}
\end{figure} 

We conclude that MR-$\beta$-NMF is able to obtain more robust and more accurate results than baseline $\beta$-NMFs that factorize a single spectrogram.

\subsubsection{Data set 2}

In this section we report the numerical results obtained for the dataset 2, using MAXITERL1=500 and MAXITERL2=1500 for Algorithm~\ref{MRbetaNMF}. 

Table \ref{tab:dtm_dataset2} reports the average SNR, the standard deviation and the best SNR computed for activations and sources obtained with the methods described in \ref{test_setup} over 100 initializations. We observe that: \\ 
    \noindent $\bullet$ MR-$\beta$-NMF provides results that show high resolutions in both frequency and temporal domains, \\ 
    \noindent $\bullet$ the regularization effect of  MR-$\beta$-NMF  w.r.t.\ baseline NMFs is less stunning than observed for the dataset 1. However the standard deviations obtained with MR-$\beta$-NMF for the sources  are significantly lower than those obtained with the baseline NMFs. \\ 
    \noindent $\bullet$ by looking more accurately at the results for the sources, MR-$\beta$-NMF globally performs better than  baseline NMFs. For the activations, baseline NMFs perform slightly better than MR-$\beta$-NMF for three scores, with an improvement of at most 1.9\% (for the $F_{4}$ score). 

\section{Numerical experiments on HSI-MSI fusion} \label{sec_hsmsexp}

In this section, we perform numerical experiments to validate the effectiveness of MR-$\beta$-NMF on the HSI-MSI fusion problem.

\subsection{Test setup and criteria }\label{sec_testsetupimage}

\subsubsection{Test data}\label{dataset_desImage}

The proposed MR-$\beta$-NMF algorithm is tested on semi-real datasets against several methods and algorithms widely used to tackle the HSI-MSI fusion problem, namely 
GSA~\cite{Aiazzi2007}, 
CNMF~\cite{Yokoya2012CNMF}, 
HySure~\cite{Simoes2016}, 
FUMI~\cite{wei2016}, 
GLP~\cite{Aiazzi2006}, 
MAPSMM~\cite{Eismann_phd}, 
SFIM~\cite{Liu2000} and 
Lanaras's method~\cite{Lanars2015}. 
In a nutshell: GSA, SFIM and GLP are pansharpening-based methods, the remaining methods belong to subspace-based methods that can be \revise{split} into unmixing methods (CNMF, Lanaras's method and HySure) and Bayesian-based approaches (FUMI, MAPSMM)~\cite{yokoya2017}.

All the algorithms are implemented and tested on a desktop computer with Intel Core i7-8700@3.2GHz CPU, Geforce RTX 2070 Super GPU and 32GB memory. The codes\footnote{\url{https://naotoyokoya.com/Download.html}} are written in MATLAB R2018a. The implementation for benchmarked algorithms comes from the comparative review of the recent literature for HSI-MSI fusion detailed in \cite{yokoya2017}. 
We consider the following real HSI: \\ 
    \noindent $\bullet$ HYDICE Urban: 
    The Urban dataset\footnote{\url{http://lesun.weebly.com/hyperspectral-data-set.html} } consists of 307$\times$307 pixels and 162 spectral reflectance bands in the wavelength range 400nm to 2500nm. We extract a 120$\times$120 subimage from this dataset.  \\  
    \noindent $\bullet$ HYDICE Washington DC Mall: this dataset\footnote{\url{https://engineering.purdue.edu/~biehl/MultiSpec/hyperspectral.html} }\label{perdue} has been acquired with HYDICE HS sensor over the Washington DC Mall and consists of 1208$\times$307 pixels and 191 spectral reflectance bands in the wavelength range 400nm to 2500nm. We extract a 240$\times$240 subimage from this dataset. \\ 
    \noindent $\bullet$ AVIRIS Indian Pines: this dataset has been acquired with NASA Airborne Visible/Infrared Imaging (AVIRIS) Spectrometer \cite{Vane1993} over the Indian Pines test site in North-western Indiana and consists of 145$\times$145 pixels and 200 spectral reflectance bands in the wavelength range 400nm to 2500nm. We extract a 120$\times$120 subimage from this dataset. 

Note that entries of the datasets are uncalibrated relative values, also referred as Digital Numbers (DN). As the goal is to fuse data and not to perform HS unmixing and classification, we do not convert these values into reflectances. 

\subsubsection{Test procedure}\label{testproc_desImage}
In this paper we consider semi-real data by conducting the numerical experiments based on the widely used Wald's protocol \cite{wald2000}. This protocol consists in simulating input MSI and HSI from a reference high-resolution HSI. In this paper, the 
MSI $X$ and HSI $Y$ are obtained from a high-resolution HSI $V$ through the models \eqref{approxV1_final} and \eqref{approxV2_final} respectively. Let us recall that the matrix $R$ from \eqref{approxV1} designates the relative spectral responses from the SR image to the MSI. In other words, it defines how the satellite instruments measure the intensity of the wavelengths (colors) of light. We  generate  a  six-band MSI $X$ by filtering the reference image $V$ with the Landsat 4 TM-like reflectance spectral responses\footnote{\url{https://landsat.usgs.gov/spectral-characteristics-viewer}}. 
The Landsat 4 TM sensor \cite{Chander2009} has a spectral coverage from 400nm to 2500nm so that it is consistent with the spectral coverage of the datasets. 


The matrix $S$ \eqref{approxV2_final} corresponds to the process of spatial blurring and downsampling. The high spectral low spatial resolution HSI $Y$ is generated by applying a 11$\times$11 Gaussian spatial filter with a standard deviation of 1.7 on each band of the reference image $V$ and downsampling every 4 pixels, both horizontally and vertically.
The HSI and MSI are finally both contaminated with noise. The level of noise is usually characterized by the SNR expressed in dB. Here, $\text{SNR}_{X}$ and $\text{SNR}_{Y}$ refer to the noise level for the MSI and HSI,  respectively. In this paper, we apply the same level of noise for each spectral band.
Let us give more insights on the last step of the MS image generation: $X = \max \big( 0 ,  RV + \epsilon_{X} \big)$ where the noise matrix $\epsilon_{X}$ is constructed as follows: we introduce $x_{i}$ for $i=1,2$, some binary coefficients, and 
\[
\tilde{N} = x_1 \frac{N_{\text{P}}}{\|N_{\text{P}}\|_F} 
+ x_2 \frac{N_{\text{F}}}{\|N_{\text{F}}\|_F} , 
\] 
where \\
\noindent $\bullet$ Each entry of $N_{\text{P}}$ is generated using the Poisson distribution of parameter $(R \tilde{V})_{i,j}$ for all $(i,j)$, where $\tilde{V}$ is a noiseless low-rank approximation of $V$ that is computed separately. More precisely, by setting $\epsilon_{X}=0_{F_{X}\times N_{X}}$ where $0_{F_{X} \times N_{X}}$ is all-zero matrix, a solution $\left(W,H \right)$ for MR-$\beta$-NMF  \eqref{MR_betaNMF_model} is first computed with Algorithm~\ref{MRbetaNMF}, and the parameter for the Poisson distribution is defined as $\tilde{V}=WH$. \\  
\noindent $\bullet$ Each entry of $N_{\text{F}}$ is generated using the normal distribution of mean 0 and variance 1. 

We set $\epsilon_{X} = \eta  \frac{\|RV\|_F}{\|\tilde{N}\|_F} \tilde{N}$ with $\eta = \frac{1}{10^{\frac{\text{SNR}_{X}}{20}}}$. For example, if we fix  $\text{SNR}_{X}=25dB$, $V_{1} = \max(0, RV+ \epsilon_{X})$ is a MS image contaminated with  5.62\% of noise (that is, $\|\epsilon_{X}\|_F = 0.0562 \|RV\|_F$) and projected onto the nonnegative orthant. The noise matrix $\epsilon_{Y}$ is obtained in the same way.  


The benchmarked algorithms listed in \ref{dataset_desImage} are configured as recommended in the comparative review \cite{yokoya2017} with the following variations: \\
    \noindent $\bullet$ The number of endmembers is a key parameter for unmixing-based methods. For MR-$\beta$-NMF, CNMF, Lanaras's method and HySure, $K$ is set to the 5 and 6 for HYDICE Urban and HYDICE Washington DC Mall datasets respectively as done in \cite{zhu2017hyperspectral}. For the Indian Pine dataset, $K=16$ as in~\cite{6576298}.\\ 
    \noindent $\bullet$ The benchmarked algorithms are stopped when the relative change of the objective function is below $10^{-4}$ or when the number of iterations exceeds 500. For algorithms such as CNMF that include outer and inner loops, we contacted the authors to set up the best balance for the maximum number of inner ($I_{1}$) and outer ($I_{2}$) loop iterations to fairly compare the methods, the following couples of values are considered: $I_{1}=100$ and $I_{2}=5$ and $I_{1}=250$ and $I_{2}=2$. The couple of values that gives the best results for each dataset is considered in section \ref{sec_numexpimage}, that is $I_{1}=100$ and $I_{2}=5$. \\
    \noindent $\bullet$ The matrix $R$ is known for all algorithms that make use of it. For MR-$\beta$-NMF, it means we use MAXITERL1=500 and MAXITERL2=0.

Finally, let us summarize the initialization strategy: \\ 
    \noindent $\bullet$ MR-$\beta$-NMF uses random nonnegative initializations for $W$ and $H$. \\
    \noindent $\bullet$ CNMF starts by unmixing the HSI using VCA \cite{Nascimento2005} to initialize the endmember signatures, \\
    \noindent $\bullet$ SISAL \cite{Bioucas2009} is used to initialize the endmembers for Lanaras's method. 

Four variants of the MR-$\beta$-NMF are considered, namely $\beta=2$, $\beta=\frac{3}{2}$, $\beta=1$ and $\beta=\frac{1}{2}$.
We test the algorithms under a scenario where no noise is added (that is, $\tilde{N}$ = 0),  
and a scenario where noise is added so that the SNRs for the noise terms in $\epsilon_{X}$ and $\epsilon_{Y}$ are $SNR_{X}=25dB$ and $SNR_{Y}=25dB$.

\subsubsection{Performance evaluation}\label{perfeval_desImage}

In order to assess the fusion quantitatively, we use the following five complementary and widely used quality measurements: \\
    \noindent $\bullet$ Peak SNR (PSNR): the PSNR is used to assess the spatial reconstruction quality of each band. It corresponds to the ratio between the maximum power of a signal and the power of residual errors. A larger PSNR value indicates a higher quality of spatial reconstruction. \\
    \noindent $\bullet$  The root-mean-square error (RMSE):  RMSE is a similarity measure between the SR image $V$ and the fused image $\tilde{V}=WH$. 
    The smaller the RMSE is, the better the fusion quality is.  \\ 
    \noindent $\bullet$ Erreur Relative Globale Adimensionnelle de Synth{\`e}se (ERGAS): ERGAS provides a macroscopic statistical measure of the quality of the fused data. More precisely, ERGAS calculates the amount of spectral distortion in the image \cite{wald2000}. The best value is at 0. \\ 
    \noindent $\bullet$ Spectral Angle Mapper (SAM): SAM is used to quantify the spectral information preservation at each pixel. More precisely, SAM determines the spectral distance by computing the angle between two vectors of the estimated and reference spectra. The overall SAM is obtained by averaging the SAMs computed for all image pixels. The smaller the absolute value of SAM is, the better the fusion quality is. \\  
    \noindent $\bullet$ The universal image quality index (UIQI) introduced in \cite{Wang2002}: UIQI evaluates the similarity between two single-band images. It is related to the correlation, luminance distortion, and contrast distortion of the estimated image w.r.t.\ reference image. UIQI indicator is in the range $\left[-1,1\right]$. For multiband images, the overall UIQI is computed by averaging the UIQI computed band by band. The best value for UIQI is at 1.
    

For more details about these quality measurements, we refer the reader to \cite{Loncan2015}  and \cite{wei2016}.

\subsection{Experimental results}\label{sec_numexpimage}

We ran 20 independent trials for each dataset detailed in \ref{dataset_desImage}. The average performance of each algorithm is shown in Tables \ref{tab:quali_measu1}to \ref{tab:quali_measu3}. 
Except for runtimes, MR-$\beta$-NMF generally rank in the fifth first for all the quality measurements. For Urban dataset with noise added, MR-$\beta$-NMF with $\beta=1$, $\beta=\nicefrac{1}{2}$ and $\beta=\nicefrac{3}{2}$ respectively rank first, second and third for all the metrics except for SAM for which CNMF ranks first. For the condition with no noise added, MR-$\beta$-NMF with $\beta=1$, $\beta=\nicefrac{1}{2}$ ranks first and second for all metrics. MR-$\beta$-NMF with $\beta=\nicefrac{3}{2}$, FUMI and HySure give similar results. 
For Washington DC Mall without noise added, MR-$\beta$-NMF with $\beta=1$, $\beta=\nicefrac{1}{2}$ ranks first and second for all metrics. 
For Indian Pines dataset without noise added, MR-$\beta$-NMF with $\beta=1$ ranks second while HySure ranks first. When noise is added, Lanaras's method ranks first while MR-$\beta$-NMF with $\beta=\nicefrac{1}{2}$, $\beta=1$ rank second and third for most criteria. 

In order to give more insights on the performance comparison between algorithms, 
Figure~\ref{fig:sam_MAPS_Urban} displays the SAM maps obtained for one trial for the Urban, Washington DC Mall and Indian Pines datasets. 
Visually, the proposed method performs competitively with other state-of-the-art methods. Indeed, as already observed with the SAM comparison in Tables \ref{tab:quali_measu1} to \ref{tab:quali_measu3}, the variants of MR-$\beta$-NMF show in general lower values for SAM errors across the images. For the Urban dataset, the highest SAM errors obtained with the variants of MR-$\beta$-NMF are less widespread and localized at some specific spots which correspond to the edges of the roofs and trees. This observation makes sense as those regions show more atypic reflectance angles and therefore more non-linear effects in terms of spectral mixture. The same observations apply for the Washington DC Mall dataset with and without  noise added. For the Indian Pines dataset without noise added, HySure and FUMI algorithms show lower SAM errors \revise{across} images, we visually confirm that MR-$\beta$-NMF with $\beta=1,\nicefrac{1}{2},\nicefrac{3}{2}$ rank third to fifth. When the noise is added, Lanaras's method gives the lowest SAM errors and is less widespread, while MR-$\beta$-NMF with $\beta=1,\nicefrac{1}{2},\nicefrac{3}{2}$ appear to provide less accurate estimates than CNMF that visually looks better. 

{
\subsection{Appendix and discussion}

In the Appendix, we provide additional numerical experiments on the widely used Cuprite data set\footnote{This data sets can be retrieved from the AVIRIS NASA site, \url{https://aviris.jpl.nasa.gov/}.}.  
First, we perform the same experiment as for the Indian Pine data set, for which the conclusions are similar, namely: without noise added, MR-$\beta$-NMF with $\beta$ = 1 ranks second while HySure ranks first. When noise is added, Lanaras’s method ranks first while MR-$\beta$-NMF with $\beta$ = 1/2, $\beta$ = 1 rank second and third for most criteria.  

\revise{The reason other methods sometimes perform better than MR-$\beta$-NMF is because the $\beta$-divergences are guaranteed to perform better only when the data follows certain distributions; for example, the Kullback-Leibler divergence ($\beta = 1$) is the maximum likelihood estimator if the data follows a Poisson distribution. 
This explains why, state-of-the-art methods based on the Frobenius norm sometimes perform similarly as our model based on $\beta$-divergences. 
To validate this behavior,} we also perform a new numerical experiment where we add multiplicative Gamma noise. We show that our proposed MU for $\beta=0$, corresponding to the Itakura-Saito (IS) divergence, outperforms by a large margin all other approaches. This is explained by the fact that the IS divergence  corresponds to the maximum likelihood estimator in the presence of multiplicative Gamma noise~\cite{fevotte2009nonnegative}. This shows that using the right data fitting term can significantly improve the performance of the unmixing. 
} 


\begin{figure*} 
    \begin{tabular}{ccc}
     \includegraphics[width=0.44\textwidth]{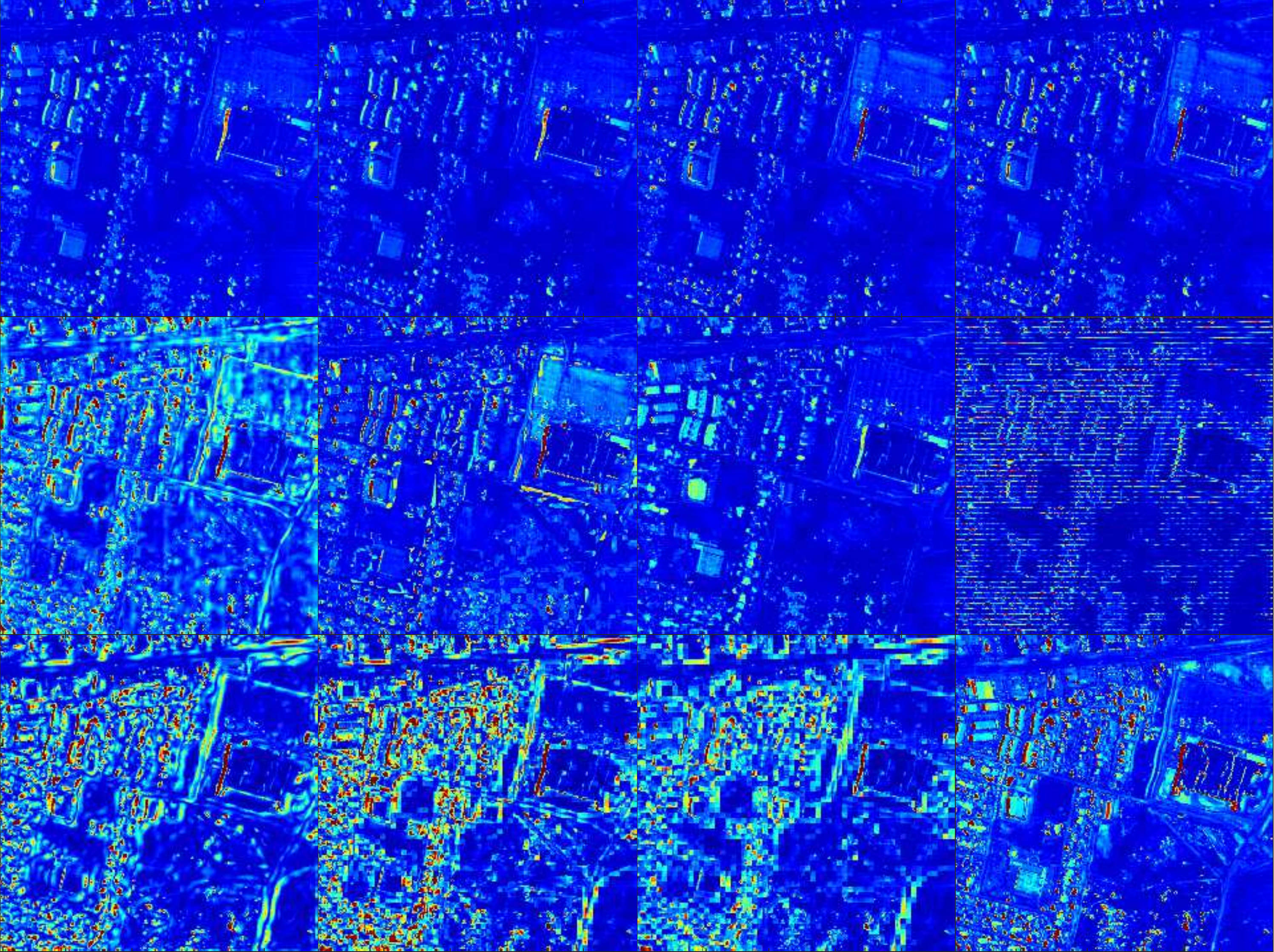}  
     &  
        \includegraphics[width=0.44\textwidth]{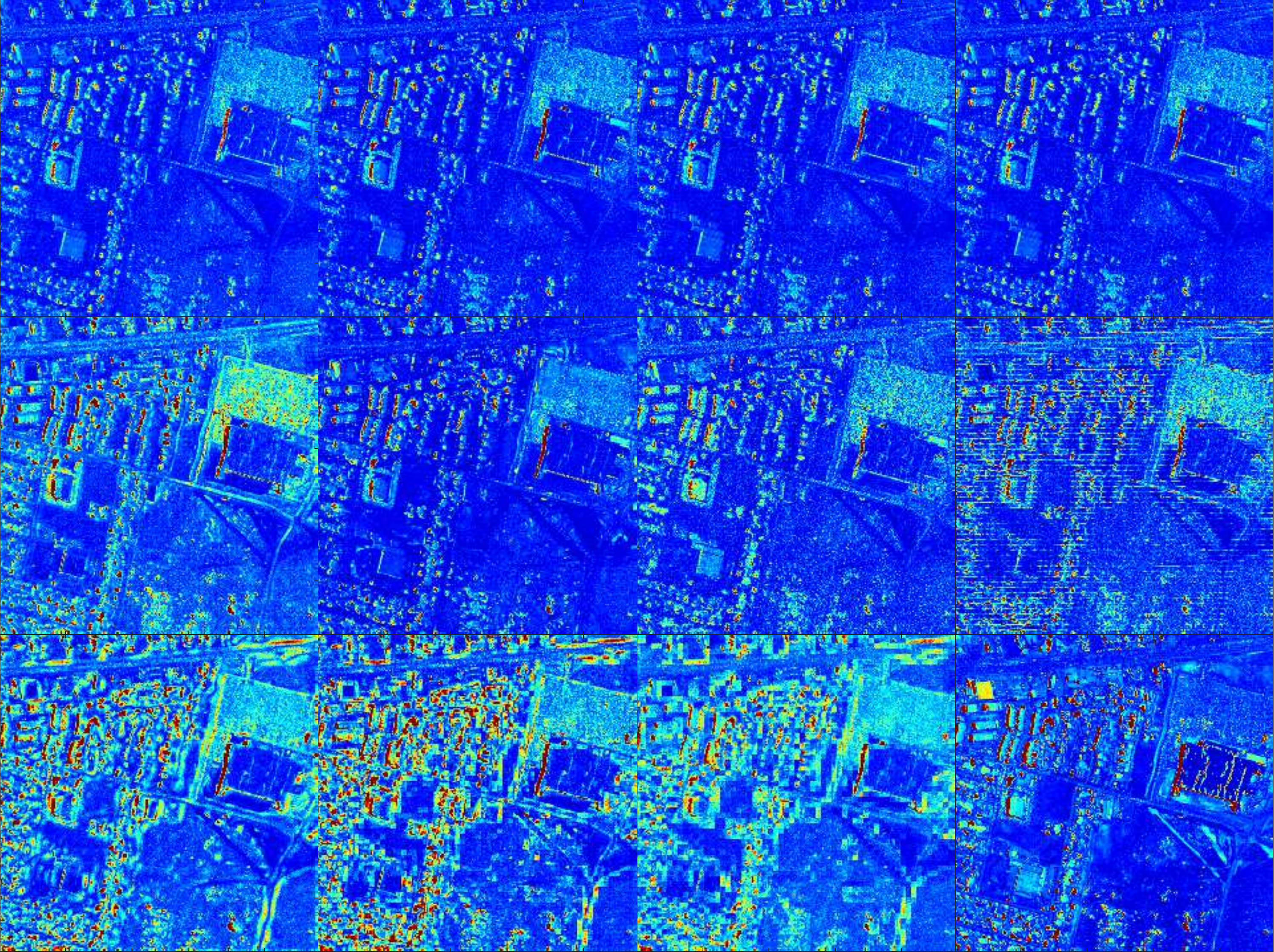}  
        & 
        \includegraphics[width=0.05\textwidth]{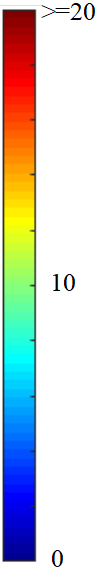} 
        \\ 
        \includegraphics[width=0.44\textwidth]{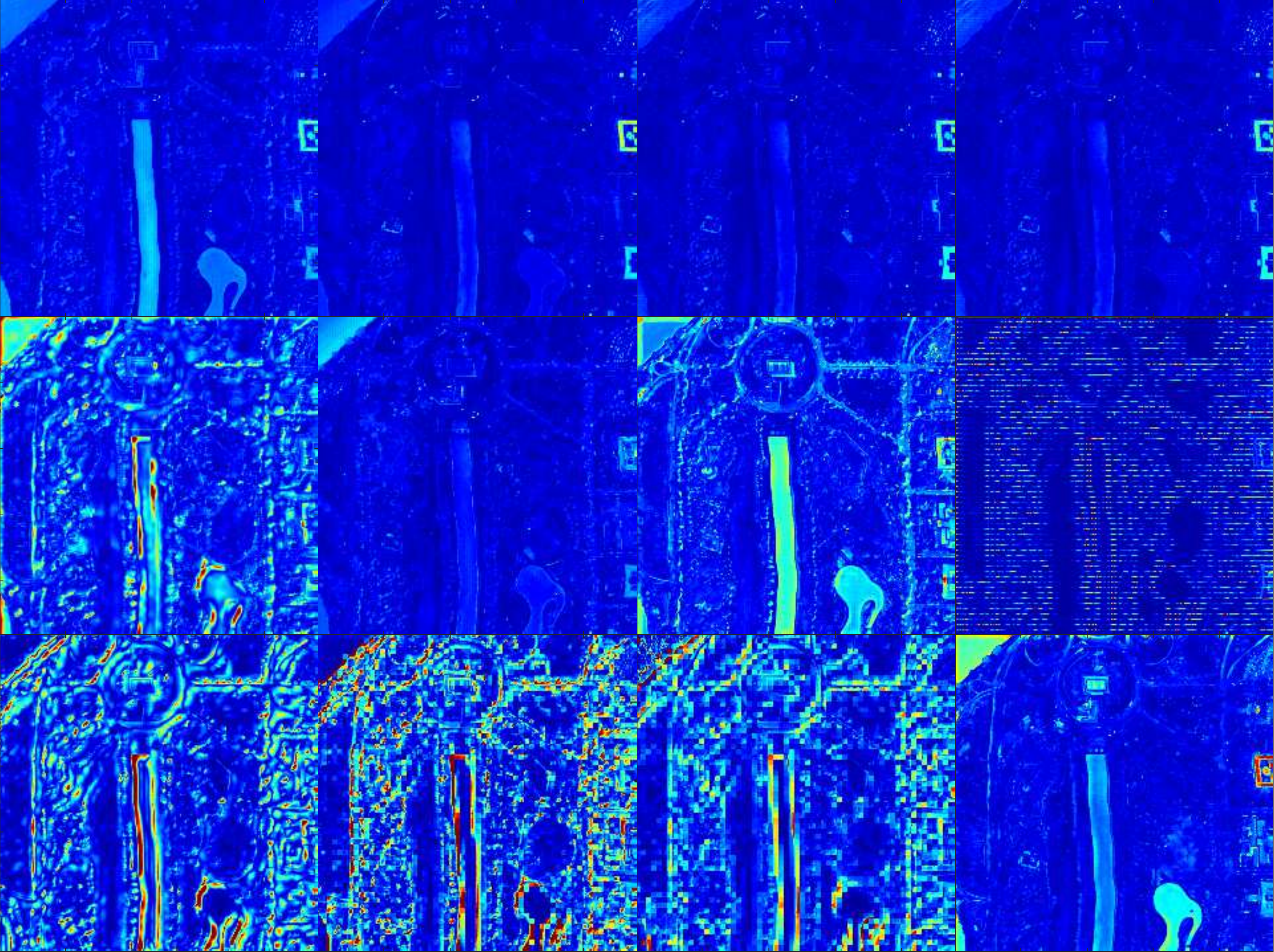}  
        & 
         \includegraphics[width=0.44\textwidth]{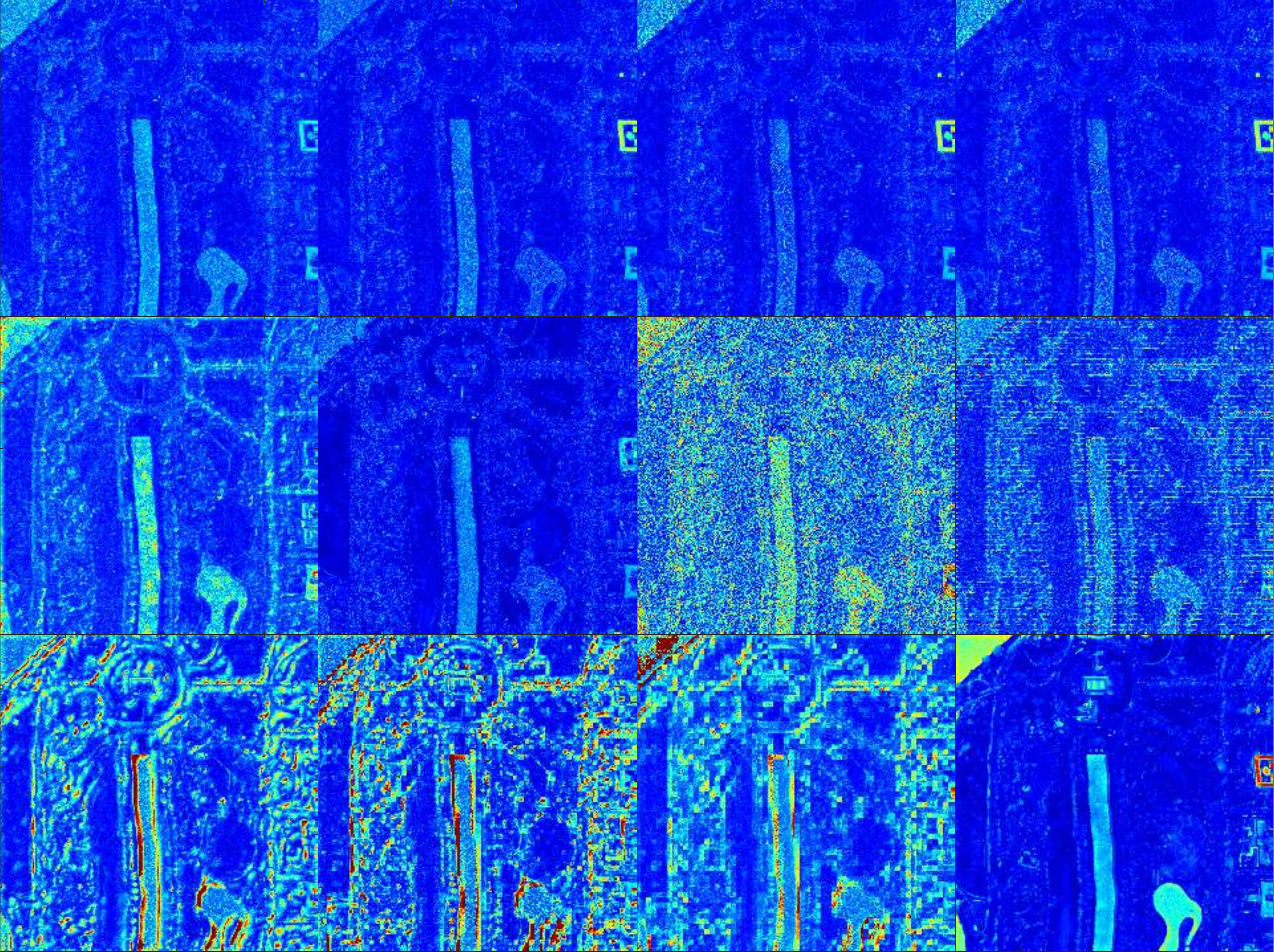} 
         & 
        \includegraphics[width=0.05\textwidth]{Images/axis_sam_MAPS_Urban25dB.png} 
        \\
        \includegraphics[width=0.44\textwidth]{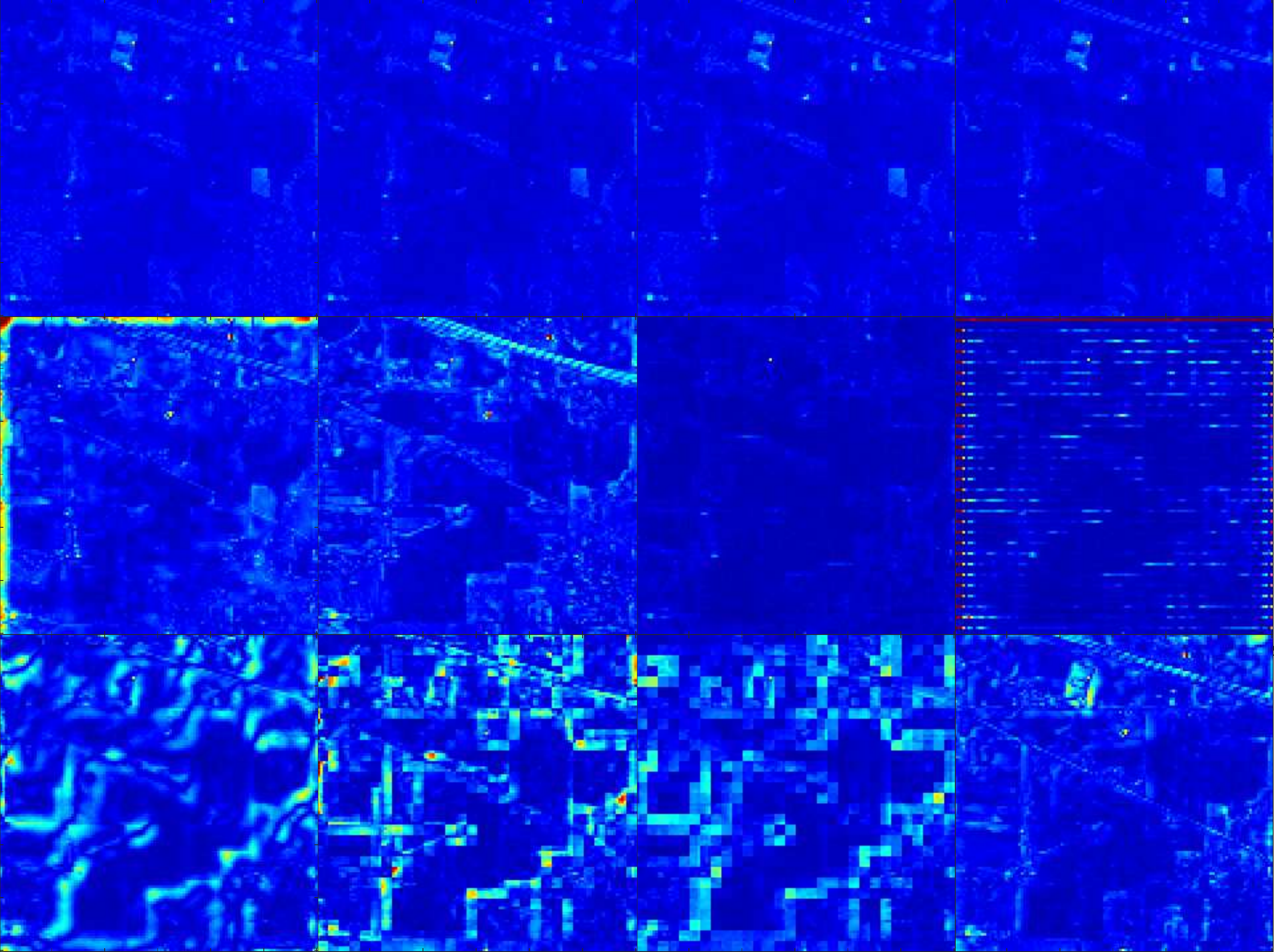} & 
        \includegraphics[width=0.44\textwidth]{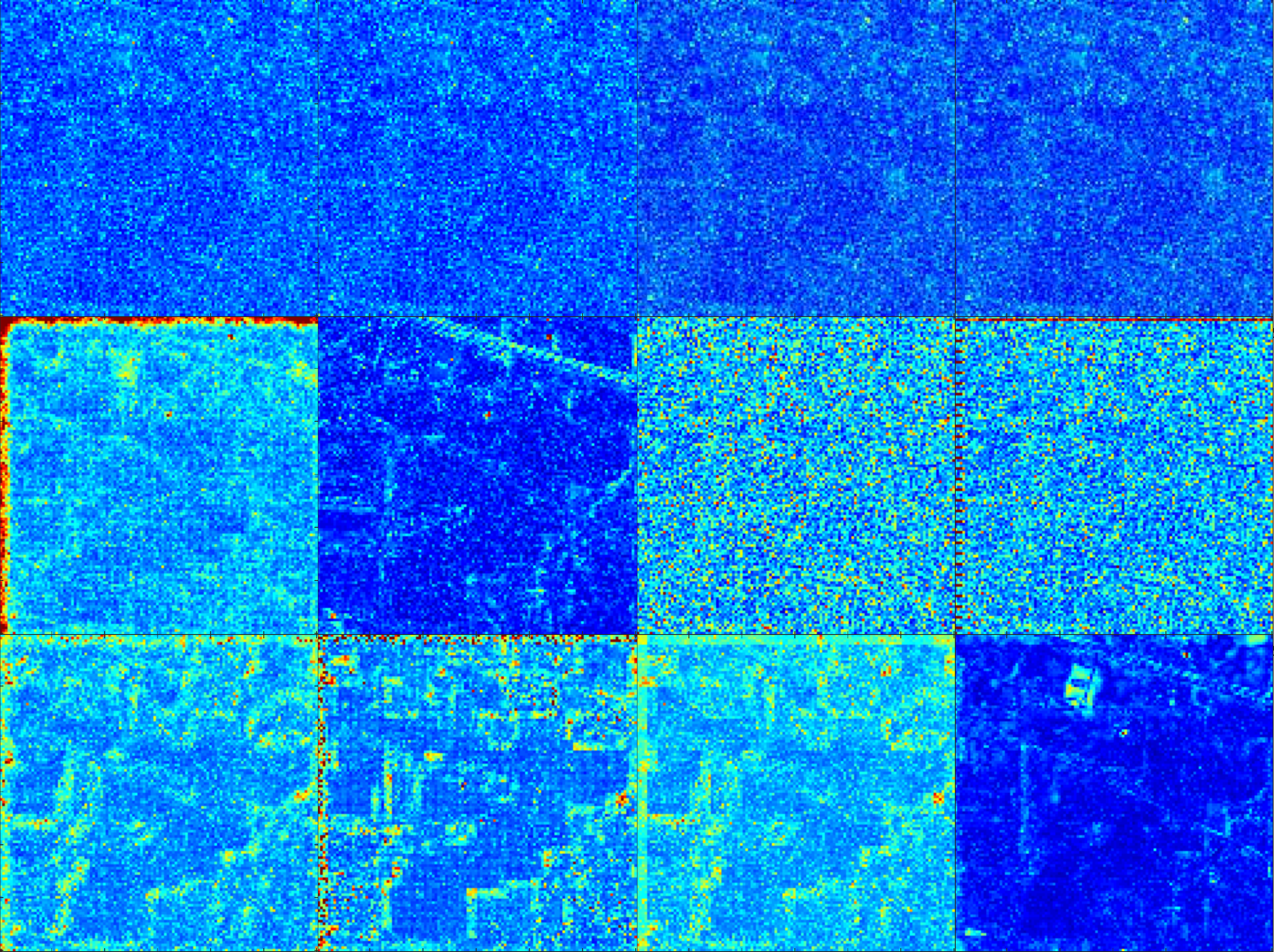} & 
        \includegraphics[width=0.05\textwidth]{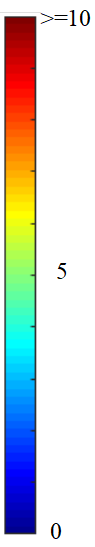}  
    \end{tabular}
    \caption{SAM maps for the different hyperspectral images. 
    From top to bottom: Urban HSI with $K=5$, Washington DC Mall HSI  with $K=6$, and  
    Indian Pines HSI with $K=16$. 
    On the left column: SAM maps without added noise. On the right column: SAM maps with added noise ($SNR_{X}=SNR_{Y}=25dB$). 
    For each image, the 12 SAM maps correspond to the different benchmark algorithms;  from left to right, top to bottom: 
    MR-$2$-NMF, 
    MR-$3/2$-NMF, 
    MR-$1$-NMF, 
    MR-$1/2$-NMF, 
    GSA, 
    CNMF, 
    HySure, 
    FUMI, 
    GLP, 
    MAPSMM, 
    SFIM, and 
    Lanaras's method. }
    \label{fig:sam_MAPS_Urban}
\end{figure*}

\begin{center}
\begin{table*}
\begin{center}
\caption{Comparison of MR-$\beta$-NMF with state-of-the-arts methods for HSI-MSI fusion on the HYDICE Urban dataset.  
The table reports the average, standard deviation for the quantitative quality assessments over 20 trials. Bold, underlined and italic to highlight the three best algorithms.}   
\label{tab:quali_measu1}
\begin{tabular}{|c|c|c|c|c|c|c|} 
\hline
Method      & Runtime (seconds) & PSNR (dB) & RMSE & ERGAS & SAM & UIQI \\
\hline
Best value & 0  & $\infty$ & 0 & 0 & 0 & 1 \\
\hline
\multicolumn{7}{|c|}{Data set - HYDICE Urban - $SNR=25dB$} \\
\hline  
MR-$\beta=2$-NMF & 52.25 $\pm$2.45  & 33.88 $\pm$ 0.10 & 16.26 $\pm$ 0.19 & 2.48 $\pm$ 0.03 & 4.13 $\pm$ 0.06 & 0.97 $\pm$ 0.00 \\
MR-$\beta=3/2$-NMF & 54.46 $\pm$2.31  &\textit{34.54 $\pm$ 0.06} & \textit{14.92 $\pm$ 0.09} & \textit{2.28 $\pm$ 0.01} & 3.65 $\pm$ 0.04 & \textit{0.98 $\pm$ 0.00} \\
MR-$\beta=1$-NMF & 52.20 $\pm$2.03  & \textbf{34.85 $\pm$ 0.10} & \textbf{ 14.51 $\pm$ 0.14} & \textbf{2.22 $\pm$ 0.03} & \textbf{3.49 $\pm$ 0.06} & \textbf{0.98 $\pm$ 0.00} \\
MR-$\beta=1/2$-NMF & 54.47 $\pm$1.96  & \underline{34.81$\pm$ 0.10} & \underline{14.65 $\pm$ 0.15} & \underline{2.24 $\pm$ 0.02} & \underline{3.52 $\pm$ 0.06} & \underline{0.98 $\pm$ 0.00} \\
GSA & \textit{0.72 $\pm$0.05}  & 32.52$\pm$ 0.00 & 19.41 $\pm$ 0.00 & 2.87 $\pm$ 0.00 & 5.63 $\pm$ 0.00 & 0.96 $\pm$ 0.00 \\
CNMF & 9.73 $\pm$1.84  & 34.33$\pm$ 0.50 & 15.45 $\pm$ 0.85 & 2.37 $\pm$ 0.17 & \textit{3.64 $\pm$ 0.27} & 0.98 $\pm$ 0.00 \\
HySure & 31.57 $\pm$2.93  & 33.90$\pm$ 0.00 & 16.44 $\pm$ 0.00 & 2.57 $\pm$ 0.00 & 4.17 $\pm$ 0.00 & 0.97 $\pm$ 0.00 \\
FUMI & \underline{0.39 $\pm$0.03}  & 32.92$\pm$ 0.00 & 20.30 $\pm$ 0.00 & 2.85 $\pm$ 0.00 & 4.92 $\pm$ 0.00 & 0.96 $\pm$ 0.00 \\
GLP & 6.05 $\pm$0.42  & 27.24$\pm$ 0.00 & 34.37 $\pm$ 0.00 & 5.10 $\pm$ 0.00 & 6.27 $\pm$ 0.00 & 0.91 $\pm$ 0.00 \\
MAPSMM & 44.12 $\pm$2.60  & 25.57$\pm$ 0.00 & 41.95 $\pm$ 0.00 & 6.15 $\pm$ 0.00 & 6.82 $\pm$ 0.00 & 0.87 $\pm$ 0.00 \\
SFIM & \textbf{0.24 $\pm$0.03}  & 26.32$\pm$ 0.00 & 37.89 $\pm$ 0.00 & 5.71 $\pm$ 0.00 & 5.90 $\pm$ 0.00 & 0.90 $\pm$ 0.00 \\
Lanaras's method & 8.12 $\pm$8.71  & 29.33$\pm$ 0.29 & 26.84 $\pm$ 0.85 & 4.39 $\pm$ 0.23 & 4.88 $\pm$ 0.26 & 0.94 $\pm$ 0.00 \\
 \hline  
 \multicolumn{7}{|c|}{Data set - HYDICE Urban - No added noise} \\
\hline  
MR-$\beta=2$-NMF & 49.55 $\pm$0.31  & 38.10 $\pm$ 0.40 & 10.94 $\pm$ 0.31 & 1.67 $\pm$ 0.07 & 3.28 $\pm$ 0.10 & 0.99 $\pm$ 0.00 \\
MR-$\beta=3/2$-NMF & 51.54 $\pm$0.52  &40.01 $\pm$ 0.50 & \textit{8.82 $\pm$ 0.32} & \textit{1.35 $\pm$ 0.09} & 2.60 $\pm$ 0.10 & \textit{0.99 $\pm$ 0.00} \\
MR-$\beta=1$-NMF & 49.71 $\pm$0.12  & \underline{41.53 $\pm$ 0.56} & \underline{ 7.86 $\pm$ 0.28} & \textbf{1.19 $\pm$ 0.07} & \underline{2.27 $\pm$ 0.10} & \textbf{0.99 $\pm$ 0.00} \\
MR-$\beta=1/2$-NMF & 52.09 $\pm$0.35  & \textbf{41.69$\pm$ 0.64} & \textbf{7.81 $\pm$ 0.35} & \underline{1.19 $\pm$ 0.08} & \textbf{2.23 $\pm$ 0.12} & \underline{0.99 $\pm$ 0.00} \\
GSA & \textit{0.67 $\pm$0.04}  & 32.93$\pm$ 0.00 & 22.17 $\pm$ 0.00 & 2.87 $\pm$ 0.00 & 5.25 $\pm$ 0.00 & 0.97 $\pm$ 0.00 \\
CNMF & 10.56 $\pm$2.02  & 35.35$\pm$ 0.64 & 13.91 $\pm$ 1.81 & 2.18 $\pm$ 0.32 & 3.26 $\pm$ 0.53 & 0.98 $\pm$ 0.00 \\
HySure & 28.51 $\pm$1.09  & 40.27$\pm$ 0.00 & 9.67 $\pm$ 0.00 & 1.46 $\pm$ 0.00 & \textit{2.50 $\pm$ 0.00} & 0.99 $\pm$ 0.00 \\
FUMI & \underline{0.36 $\pm$0.02}  & \textit{41.01$\pm$} 0.00 & 14.14 $\pm$ 0.00 & 1.67 $\pm$ 0.00 & 2.71 $\pm$ 0.00 & 0.99 $\pm$ 0.00 \\
GLP & 5.61 $\pm$0.09  & 27.97$\pm$ 0.00 & 31.97 $\pm$ 0.00 & 4.65 $\pm$ 0.00 & 4.78 $\pm$ 0.00 & 0.94 $\pm$ 0.00 \\
MAPSMM & 42.19 $\pm$0.84  & 25.92$\pm$ 0.00 & 40.56 $\pm$ 0.00 & 5.89 $\pm$ 0.00 & 5.66 $\pm$ 0.00 & 0.89 $\pm$ 0.00 \\
SFIM & \textbf{0.21 $\pm$0.03}  & 27.05$\pm$ 0.00 & 35.19 $\pm$ 0.00 & 5.21 $\pm$ 0.00 & 4.21 $\pm$ 0.00 & 0.93 $\pm$ 0.00 \\
Lanaras's method & 4.72 $\pm$4.72  & 29.50$\pm$ 0.35 & 26.54 $\pm$ 0.69 & 4.26 $\pm$ 0.23 & 4.57 $\pm$ 0.21 & 0.95 $\pm$ 0.00 \\

\hline  
\end{tabular}
\end{center}
\end{table*}
\end{center}

\begin{center}
\begin{table*}
\begin{center}
\caption{Comparison of MR-$\beta$-NMF with state-of-the-arts methods for HSI-MSI fusion on the HYDICE Washington DC Mall dataset. The table reports the average, standard deviation for the quantitative quality assessments over 20 trials. Bold, underlined and italic to highlight the three best algorithms.}   
\label{tab:quali_measu2}
\begin{tabular}{|c|c|c|c|c|c|c|} 
\hline
Method      & Runtime (seconds) & PSNR (dB) & RMSE & ERGAS & SAM & UIQI \\
\hline
Best value & 0  & $\infty$ & 0 & 0 & 0 & 1 \\
\hline
 \multicolumn{7}{|c|}{Data set - HYDICE Washington DC Mall - $SNR=25dB$} \\
\hline  
MR-$\beta=2$-NMF & 57.59 $\pm$0.32  & \underline{26.77 $\pm$ 0.25} & 202.02 $\pm$ 3.59 & 18.21 $\pm$ 0.13 & 3.38 $\pm$ 0.11 & \textbf{0.90 $\pm$ 0.01} \\
MR-$\beta=3/2$-NMF & 60.04 $\pm$0.39  & \textit{26.37 $\pm$ 0.32} &\textit{ 194.40 $\pm$ 6.38} & \textit{18.07 $\pm$ 0.23} & 3.05 $\pm$ 0.18 & 0.87 $\pm$ 0.01 \\
MR-$\beta=1$-NMF & 57.95 $\pm$0.24  & 26.29 $\pm$ 0.20 &  \textbf{188.42 $\pm$ 11.18} & 18.50 $\pm$ 0.25 & \textit{2.83 $\pm$ 0.28} & 0.86 $\pm$ 0.01 \\
MR-$\beta=1/2$-NMF & 60.38 $\pm$0.20  & 25.68$\pm$ 0.28 & 201.62 $\pm$ 14.05 & 19.46 $\pm$ 0.41 & 3.06 $\pm$ 0.30 & 0.83 $\pm$ 0.01 \\
GSA & \textit{0.79 $\pm$0.04}  & 23.00$\pm$ 0.00 & 235.64 $\pm$ 0.00 & 32.25 $\pm$ 0.00 & 4.20 $\pm$ 0.00 & 0.74 $\pm$ 0.00 \\
CNMF & 7.25 $\pm$1.26  & \textbf{27.60$\pm$ 0.09} & \underline{192.67 $\pm$ 6.50} & \underline{17.37 $\pm$ 0.10} & \textbf{2.55 $\pm$ 0.14} & \textit{0.89 $\pm$ 0.00} \\
HySure & 34.14 $\pm$0.94  & 24.01$\pm$ 0.00 & 351.13 $\pm$ 0.00 & 33.51 $\pm$ 0.00 & 6.15 $\pm$ 0.00 & 0.75 $\pm$ 0.00 \\
FUMI & \underline{0.42 $\pm$0.02}  & 24.67$\pm$ 0.00 & 243.06 $\pm$ 0.00 & 19.73 $\pm$ 0.00 & 4.04 $\pm$ 0.00 & 0.80 $\pm$ 0.00 \\
GLP & 6.42 $\pm$0.24  & 19.85$\pm$ 0.00 & 423.89 $\pm$ 0.00 & 33.64 $\pm$ 0.00 & 5.28 $\pm$ 0.00 & 0.67 $\pm$ 0.00 \\
MAPSMM & 40.91 $\pm$0.46  & 19.34$\pm$ 0.00 & 494.39 $\pm$ 0.00 & 32.18 $\pm$ 0.00 & 5.91 $\pm$ 0.00 & 0.65 $\pm$ 0.00 \\
SFIM & \textbf{0.24 $\pm$0.01}  & 18.08$\pm$ 0.00 & 892.35 $\pm$ 0.00 & 42.23 $\pm$ 0.00 & 5.45 $\pm$ 0.00 & 0.64 $\pm$ 0.00 \\
Lanaras's method & 3.11 $\pm$1.94  & 25.95$\pm$ 0.06 & 235.62 $\pm$ 2.67 & \textbf{17.36 $\pm$ 0.02} & \underline{2.78 $\pm$ 0.03} & \underline{0.90 $\pm$ 0.00} \\
 \hline  
 \multicolumn{7}{|c|}{Data set - HYDICE Washington DC Mall - No added noise} \\
\hline  
MR-$\beta=2$-NMF & 58.55 $\pm$1.50 & 32.61 $\pm$ 0.28 & 128.50 $\pm$ 5.87 & 5.54 $\pm$ 0.13 & 2.59 $\pm$ 0.12 & 0.97 $\pm$ 0.00 \\
MR-$\beta=3/2$-NMF & 60.95 $\pm$1.58  &35.36 $\pm$ 0.38 & \textit{104.11 $\pm$ 5.89} & 2.41 $\pm$ 0.22 & 1.89 $\pm$ 0.12 & \textit{0.98 $\pm$ 0.00} \\
MR-$\beta=1$-NMF & 59.01 $\pm$2.02  & \underline{37.80 $\pm$ 0.75} & \textbf{ 89.20 $\pm$ 5.43} & \underline{1.76 $\pm$ 0.27} & \textbf{1.47 $\pm$ 0.07} & \textbf{0.99 $\pm$ 0.00} \\
MR-$\beta=1/2$-NMF & 61.21 $\pm$1.05  & \textbf{38.27$\pm$ 0.83} & \underline{90.88 $\pm$ 6.26} & \textbf{1.55 $\pm$ 0.20} & \underline{1.48 $\pm$ 0.10} & \underline{0.99 $\pm$ 0.00} \\
GSA & \textit{0.81 $\pm$0.08}  & 29.93$\pm$ 0.00 & 262.27 $\pm$ 0.00 & 3.11 $\pm$ 0.00 & 3.84 $\pm$ 0.00 & 0.97 $\pm$ 0.00 \\
CNMF & 7.90 $\pm$2.67  & 31.46$\pm$ 1.07 & 152.95 $\pm$ 14.25 & 5.93 $\pm$ 8.92 & 2.01 $\pm$ 0.49 & 0.96 $\pm$ 0.03 \\
HySure & 35.85 $\pm$2.19  & 31.23$\pm$ 0.00 & 190.57 $\pm$ 0.10 & 3.21 $\pm$ 0.00 & 3.21 $\pm$ 0.00 & 0.96 $\pm$ 0.00 \\
FUMI & \underline{0.43 $\pm$0.03}  & \textit{36.52$\pm$} 0.00 & 142.92 $\pm$ 0.00 & \textit{2.32 $\pm$ 0.00} & \textit{1.76 $\pm$ 0.00} & 0.98 $\pm$ 0.00 \\
GLP & 6.95 $\pm$0.52  & 26.19$\pm$ 0.00 & 373.07 $\pm$ 0.00 & 4.53 $\pm$ 0.00 & 4.16 $\pm$ 0.00 & 0.93 $\pm$ 0.00 \\
MAPSMM & 42.88 $\pm$0.85  & 24.42$\pm$ 0.00 & 459.09 $\pm$ 0.00 & 5.61 $\pm$ 0.00 & 4.98 $\pm$ 0.00 & 0.88 $\pm$ 0.00 \\
SFIM & \textbf{0.27 $\pm$0.05}  & 25.12$\pm$ 0.00 & 408.40 $\pm$ 0.00 & 6.53 $\pm$ 0.00 & 3.95 $\pm$ 0.00 & 0.92 $\pm$ 0.00 \\
Lanaras's method & 4.70 $\pm$3.55  & 28.46$\pm$ 0.36 & 230.31 $\pm$ 7.44 & 3.94 $\pm$ 0.21 & 2.55 $\pm$ 0.03 & 0.96 $\pm$ 0.00 \\
 \hline  
\end{tabular}
\end{center}
\end{table*}
\end{center}

\begin{center}
\begin{table*}
\begin{center}
\caption{Comparison of MR-$\beta$-NMF with state-of-the-arts methods for HSI-MSI fusion of the dataset AVIRIS Indian Pines dataset. The table reports the average, standard deviation for the quantitative quality assessments over 20 trials. Bold, underlined and italic to highlight the three best algorithms.}   
\label{tab:quali_measu3}
\begin{tabular}{|c|c|c|c|c|c|c|} 
\hline
Method      & Runtime (seconds) & PSNR (dB) & RMSE & ERGAS & SAM & UIQI \\
\hline
Best value & 0  & $\infty$ & 0 & 0 & 0 & 1 \\
\hline
\multicolumn{7}{|c|}{Data set - AVIRIS Indian Pines - $SNR=25dB$} \\
\hline  
MR-$\beta=2$-NMF & 15.48 $\pm$0.53  & 27.11 $\pm$ 0.03 & 187.37 $\pm$ 0.80& 1.64 $\pm$ 0.01 & 2.26 $\pm$ 0.02 & 0.78 $\pm$ 0.00 \\
MR-$\beta=3/2$-NMF & 16.76 $\pm$0.75  &27.29 $\pm$ 0.02 & 183.47 $\pm$ 0.56 & 1.57 $\pm$ 0.00 & 2.14 $\pm$ 0.01 & \textit{0.78 $\pm$ 0.00} \\
MR-$\beta=1$-NMF & 15.57 $\pm$0.53  & \textit{27.38 $\pm$ 0.02} & \textit{181.77 $\pm$ 0.51} & \textit{1.55 $\pm$ 0.00} & 2.09 $\pm$ 0.01 & \underline{0.78 $\pm$ 0.00} \\
MR-$\beta=1/2$-NMF & 16.90 $\pm$0.55  & \underline{27.55$\pm$ 0.03} & \underline{179.10 $\pm$ 0.41} & \underline{1.52 $\pm$ 0.01} & \textit{2.03 $\pm$ 0.01} & \textbf{0.79 $\pm$ 0.00} \\
GSA & \textit{0.31 $\pm$0.04}  & 21.79$\pm$ 0.00 & 326.23 $\pm$ 0.00 & 2.94 $\pm$ 0.00 & 3.28 $\pm$ 0.00 & 0.64 $\pm$ 0.00 \\
CNMF & 2.13 $\pm$0.10  & 24.05$\pm$ 0.21 & 241.72 $\pm$ 5.39 & 2.33 $\pm$ 0.07 & \underline{1.68 $\pm$ 0.04} & 0.60 $\pm$ 0.01 \\
HySure & 22.70 $\pm$0.43  & 24.82$\pm$ 0.28 & 241.17 $\pm$ 3.31 & 2.33$\pm$ 0.13 & 3.25 $\pm$ 0.05 & 0.64 $\pm$ 0.01 \\
FUMI & \textbf{0.12 $\pm$0.02}  & 24.71$\pm$ 0.00 & 242.25 $\pm$ 0.00 & 2.27 $\pm$ 0.00 & 3.19 $\pm$ 0.00 & 0.66 $\pm$ 0.00 \\
GLP & 2.36 $\pm$0.07  & 20.24$\pm$ 0.00 & 403.70 $\pm$ 0.00 & 3.47 $\pm$ 0.00 & 3.14 $\pm$ 0.00 & 0.49 $\pm$ 0.00 \\
MAPSMM & 10.63 $\pm$0.21  & 18.35$\pm$ 0.00 & 519.28 $\pm$ 0.00 & 4.30 $\pm$ 0.00 & 3.36 $\pm$ 0.00 & 0.42 $\pm$ 0.00 \\
SFIM & \underline{0.20 $\pm$0.02}  & 19.74$\pm$ 0.00 & 423.46 $\pm$ 0.00 & 3.68 $\pm$ 0.00 & 3.31$\pm$ 0.00 & 0.48 $\pm$ 0.00 \\
Lanaras's method & 2.82 $\pm$1.69  & \textbf{29.59$\pm$ 0.71} & \textbf{ 149.59 $\pm$ 13.20} & \textbf{1.19 $\pm$ 0.09} & \textbf{1.43 $\pm$ 0.06} & 0.76 $\pm$ 0.05 \\
 \hline  
 \multicolumn{7}{|c|}{Data set - AVIRIS Indian Pines - No added noise} \\
\hline  
MR-$\beta=2$-NMF & 14.55 $\pm$0.07 & 36.43 $\pm$ 0.15 & 69.71 $\pm$ 1.65 & 0.65 $\pm$ 0.02 & 1.23 $\pm$ 0.03 & 0.92 $\pm$ 0.00 \\
MR-$\beta=3/2$-NMF & 15.69$\pm$0.09  &38.09 $\pm$ 0.09 & 57.69 $\pm$ 0.89 & 0.48 $\pm$ 0.00 & 1.00 $\pm$ 0.02 & 0.93 $\pm$ 0.00 \\
MR-$\beta=1$-NMF & 14.56 $\pm$0.03  & \underline{39.30 $\pm$ 0.13} & \underline{ 51.66 $\pm$ 0.79} & \underline{0.41 $\pm$ 0.01} & \underline{0.90 $\pm$ 0.01} & \textit{0.94 $\pm$ 0.00} \\
MR-$\beta=1/2$-NMF & 16.00 $\pm$0.05  & \textit{39.15$\pm$ 0.20} & \textit{52.98 $\pm$ 1.18} & \textit{0.42 $\pm$ 0.01} & 0.91 $\pm$ 0.02 & 0.94$\pm$ 0.00 \\
GSA & \textit{0.29 $\pm$0.03}  & 23.33$\pm$ 0.00 & 300.32 $\pm$ 0.00 & 2.42 $\pm$ 0.00 & 1.38 $\pm$ 0.00 & 0.90$\pm$ 0.00 \\
CNMF & 1.94 $\pm$0.09  & 26.72$\pm$ 0.16 & 184.42 $\pm$ 2.95 & 1.71 $\pm$ 0.04 & 1.17 $\pm$ 0.03 & 0.74 $\pm$ 0.01 \\
HySure & 20.83 $\pm$0.17  & \textbf{40.96$\pm$ 0.03} & \textbf{44.29 $\pm$ 0.18} & \textbf{0.34 $\pm$ 0.00} & \textbf{0.56 $\pm$ 0.00} & \textbf{0.96 $\pm$ 0.00} \\
FUMI & \textbf{0.11 $\pm$0.02}  & 39.13$\pm$ 0.00 & 115.58 $\pm$ 0.00 & 0.83 $\pm$ 0.00 & \textit{0.90 $\pm$ 0.00} & \underline{0.95 $\pm$ 0.00} \\
GLP & 2.24 $\pm$0.05  & 23.12$\pm$ 0.00 & 312.46 $\pm$ 0.00 & 2.48 $\pm$ 0.00 & 1.42 $\pm$ 0.00 & 0.85 $\pm$ 0.00 \\
MAPSMM & 10.09 $\pm$0.14  & 22.27$\pm$ 0.00 & 346.40 $\pm$ 0.00 & 2.74 $\pm$ 0.00 & 1.54 $\pm$ 0.00 & 0.78 $\pm$ 0.00 \\
SFIM & \underline{0.18 $\pm$0.01}  & 22.66$\pm$ 0.00 & 328.92 $\pm$ 0.00 & 2.62 $\pm$ 0.00 & 1.39 $\pm$ 0.00 & 0.85 $\pm$ 0.00 \\
Lanaras's method & 2.05 $\pm$1.90  & 29.89$\pm$ 0.54 & 155.03 $\pm$ 7.39 & 1.15 $\pm$ 0.06 & 1.18 $\pm$ 0.02 & 0.81 $\pm$ 0.00 \\
 \hline  
\end{tabular}
\end{center}
\end{table*}
\end{center}

\vspace{-0.5cm} 

\section{Conclusion}\label{sec_conclusions}

In this paper, we have considered the multi-resolution $\beta$-NMF (MR-$\beta$-NMF) problem~\eqref{MR_betaNMF_model}.  
The estimation of the sources and their activations relies on the minimization of the $\beta$-divergence, a flexible family of measures of fit. MR-$\beta$-NMF addresses the resolution trade-off between two adversarial dimensions  by  fusing  the  information coming  from  multiple  data  with  different  resolutions  in order  to  produce  a  factorization  with  high  resolutions  for all  the  dimensions. 
We have provided multiplicative updates (MU) to tackle the minimization problem. 
We have showcased the efficiency of the MU on two instrumental examples. The first is the audio spectral unmixing for which the frequency-by-time data matrix is computed with the short-time Fourier transform and is the result of a trade-off between the frequency resolution and the temporal resolution. We highlighted the capacity of this model to provide solutions with high frequency and high temporal resolution. 
 MR-$\beta$-NMF was shown to be well suited for audio applications such as transcription problems, 
 and performs in general better than baseline NMF methods. 
The second is blind hyperspectral unmixing for which the wavelength-by-location data matrix is a trade-off between the number of wavelengths measured and the spatial resolution. We demonstrated the efficiency  of MR-$\beta$-NMF to tackle the HSI-MSI fusion problem compared to state-of-the-art methods.

\paragraph*{Acknowledgment} The authors would like to thank Andersen M.S.\ Ang for the help given for the generation of the audio signals used in this paper, and Xavier Siebert for his insightful comments
that helped us improve the paper.

\ifCLASSOPTIONcaptionsoff
  \newpage
\fi



%


\bibliographystyle{IEEEtran}
\bibliography{Bibliography}

\appendix

\section{HSI-MSI fusion for the Cuprite data set}

This appendix contains additional numerical experiments for the fusion of hyperspectral and multispectral images (HSI-MSI fusion) on the widely used Cuprite data set. 

First, we perform the same experiment as in the paper for the Indian Pine data set, namely comparing state-of-the-art algorithms for the noiseless and noisy (Gaussian and Poisson noise) Cuprite hyperspectral image. 

Then, we also perform a new numerical experiment where we add multiplicative Gamma noise. We show that our proposed MU for $\beta=0$, corresponding to the Itakura-Saito (IS) divergence, outperforms all other approaches in this scenario. This is explained by the fact that the IS divergence  corresponds to the maximum likelihood estimator in the presence of multiplicative Gamma noise~\cite{fevotte2009nonnegative}. \\

As done in the numerical experiments of this paper, we ran 20 independent trials for the Cuprite data set. 
Three MSI-MSI fusion analysis are performed. The first two are the same as in this paper, namely (1) without noise, and  
(2) with Gaussian and Poisson noise added to the images. 
Then, we also ass multiplicative Gamma noise (mean equal to 1 and 5\% of standard deviation) applied to the HSI and MSI.
For such noise statistics (multiplicative Gamma distribution), the most suited $\beta$-divergence for the objective function of the optimization problem is  the Itakura-Saïto divergence ($\beta=0$)~\cite{fevotte2009nonnegative}. 
Therefore we also run our proposed algorithm, MR-$\beta$-NMF, with $\beta=0$. 

The average performance of each algorithm is shown in Table \ref{tab:quali_measu_cuprite}. For analysis (1) and (2), the conclusions are similar to the ones observed for Indian Pines dataset, namely: without noise added, MR-$\beta$-NMF with $\beta$ = 1 ranks second while HySure ranks first. When noise is added, Lanaras’s method ranks first while MR-$\beta$-NMF with $\beta$ = 1/2, $\beta$ = 1 rank second and third for most criteria. 

For the case when Gamma noise is added, MR-$\beta$-NMF with $\beta$ = 0 significantly outperforms the others methods, while  MR-$\beta$-NMF with $\beta$ = 1/2 and $\beta$=1 respectively rank second and third. 
This illustrates the importance of using the right data fitting term depending on the noise statistics.

\begin{center}
\begin{table*}
\begin{center}
\caption{Comparison of MR-$\beta$-NMF with state-of-the-arts methods for HSI-MSI fusion of the dataset AVIRIS Cuprite dataset. The table reports the average, standard deviation for the quantitative quality assessments over 20 trials. Bold, underlined and italic to highlight the three best algorithms.}   
\label{tab:quali_measu_cuprite} 
\footnotesize  
\begin{tabular}{|c|c|c|c|c|c|c|} 
\hline
Method      & Runtime (seconds) & PSNR (dB) & RMSE & ERGAS & SAM & UIQI \\
\hline
Best value & 0  & $\infty$ & 0 & 0 & 0 & 1 \\
\hline
\multicolumn{7}{|c|}{Data set - AVIRIS Cuprite - No added noise (1)} \\
\hline  
MR-$\beta=2$-NMF & 13.01 $\pm$0.27 & 37.75 $\pm$ 0.12 & 54.72 $\pm$ 0.88 & 4.21 $\pm$ 0.08 & 1.11 $\pm$ 0.02 & 0.94 $\pm$ 0.00 \\
MR-$\beta=3/2$-NMF & 14.63$\pm$0.16  &38.66 $\pm$ 0.12 & 48.24 $\pm$ 0.61 & 4.19 $\pm$ 0.10 & 0.97 $\pm$ 0.01 & 0.95 $\pm$ 0.00 \\
MR-$\beta=1$-NMF & 13.02 $\pm$0.15  & \underline{39.56 $\pm$ 0.17} & \underline{ 44.08 $\pm$ 0.53} & 4.16 $\pm$ 0.06 & 0.89 $\pm$ 0.01 & \underline{0.96 $\pm$ 0.00} \\
MR-$\beta=1/2$-NMF & 15.36 $\pm$0.24  & \textit{38.98$\pm$ 0.21} & \textit{46.74 $\pm$ 1.25} & 3.79 $\pm$ 0.04 & 0.95 $\pm$ 0.03 & 0.95$\pm$ 0.00 \\
MR-$\beta=0$-NMF & 13.76 $\pm$0.24  & 37.88$\pm$ 0.25 & 53.60 $\pm$ 1.86 & \textbf{2.41 $\pm$ 0.02} & 1.07 $\pm$ 0.04 & 0.94$\pm$ 0.00 \\
GSA & \textbf{0.09 $\pm$0.05}  & 5.10$\pm$ 0.00 & 2694.68 $\pm$ 0.00 & 24.15 $\pm$ 0.00 & 81.44 $\pm$ 0.00 & 0.09$\pm$ 0.00 \\
CNMF & 1.72 $\pm$0.19  & 20.23$\pm$ 0.06 & 375.37 $\pm$ 2.67 & 4.73 $\pm$ 0.02 & 0.70 $\pm$ 0.01 & 0.62 $\pm$ 0.00 \\
HySure & 9.66 $\pm$0.29  & \textbf{41.03$\pm$ 0.05} & \textbf{39.66 $\pm$ 0.40} & \underline{2.68 $\pm$ 0.08} & \textbf{0.60 $\pm$ 0.01} & \textbf{0.97 $\pm$ 0.00} \\
FUMI & \underline{0.11 $\pm$0.03}  & 37.88$\pm$ 0.00 & 102.43 $\pm$ 0.00 & \textit{2.82 $\pm$ 0.00} & 0.86 $\pm$ 0.00 & \textit{0.95 $\pm$ 0.00} \\
GLP & 2.72 $\pm$0.09  & 23.28$\pm$ 0.00 & 265.16 $\pm$ 0.00 & 3.51 $\pm$ 0.00 & \underline{0.65 $\pm$ 0.00} & 0.86 $\pm$ 0.00 \\
MAPSMM & 21.63 $\pm$0.52  & 22.59$\pm$ 0.00 & 287.05 $\pm$ 0.00 & 3.68 $\pm$ 0.00 & \textit{0.65 $\pm$ 0.00} & 0.83 $\pm$ 0.00 \\
SFIM & \textit{0.14 $\pm$0.01}  & 22.80$\pm$ 0.00 & 278.27 $\pm$ 0.00 & 16.82 $\pm$ 0.00 & 0.65 $\pm$ 0.00 & 0.86 $\pm$ 0.00 \\
Lanaras's method & 19.04 $\pm$4.20  & 28.84$\pm$ 0.84 & 137.94 $\pm$ 12.71 & 4.63 $\pm$ 0.31 & 1.39 $\pm$ 0.21 & 0.76 $\pm$ 0.07 \\
 \hline  
\multicolumn{7}{|c|}{Data set - AVIRIS Cuprite - $SNR=25dB$ (2)} \\
\hline  
MR-$\beta=2$-NMF & 12.97 $\pm$0.40  & 26.81 $\pm$ 0.04 & 176.60 $\pm$ 0.91 & 5.21 $\pm$ 0.11 & 2.56 $\pm$ 0.03 & 0.71 $\pm$ 0.00 \\
MR-$\beta=3/2$-NMF & 14.64 $\pm$0.38  &26.79 $\pm$ 0.04 & 176.02 $\pm$ 0.63 & 5.14 $\pm$ 0.12 & 2.53 $\pm$ 0.01 & \textit{0.71 $\pm$ 0.00} \\
MR-$\beta=1$-NMF & 13.06 $\pm$0.42  & \textit{26.84 $\pm$ 0.03} & \textit{175.02 $\pm$ 0.66} & \textit{5.07 $\pm$ 0.12} & 2.49 $\pm$ 0.02 & \underline{0.71 $\pm$ 0.00} \\
MR-$\beta=1/2$-NMF & 15.26 $\pm$0.35  & \underline{26.85$\pm$ 0.05} & \underline{174.83 $\pm$ 0.97} & \textbf{4.79 $\pm$ 0.19} & 2.47 $\pm$ 0.02 & \textbf{0.71 $\pm$ 0.00} \\
MR-$\beta=0$-NMF & 13.65 $\pm$0.33  & 26.63$\pm$ 0.08 & 177.76 $\pm$ 1.67 & 8.67 $\pm$ 0.29 & 2.45 $\pm$ 0.03 & 0.70 $\pm$ 0.01 \\
GSA & \textbf{0.07 $\pm$0.06}  & 3.16$\pm$ 0.00 & 2724.95 $\pm$ 0.00 & 25.53 $\pm$ 0.00 & Inf $\pm$ Inf & 0.00 $\pm$ 0.00 \\
CNMF & 1.36 $\pm$0.16  & 20.05$\pm$ 0.12 & 387.29 $\pm$ 5.69 & 6.42 $\pm$ 2.83 & \underline{1.59 $\pm$ 0.05} & 0.52 $\pm$ 0.02 \\
HySure & 9.99 $\pm$0.14  & 18.83$\pm$ 0.40 & 580.01 $\pm$ 45.86 & 11.26$\pm$ 1.88 & 8.84 $\pm$ 0.52 & 0.29 $\pm$ 0.01 \\
FUMI & \textit{0.14 $\pm$0.16}  & 20.14$\pm$ 0.00 & 460.49 $\pm$ 0.00 & 7.75 $\pm$ 0.00 & 7.23 $\pm$ 0.00 & 0.34 $\pm$ 0.00 \\
GLP & 2.74 $\pm$0.14  & 20.37$\pm$ 0.00 & 365.72 $\pm$ 0.00 & 6.13 $\pm$ 0.00 & 2.67 $\pm$ 0.00 & 0.39 $\pm$ 0.00 \\
MAPSMM & 21.71 $\pm$0.50  & 20.12$\pm$ 0.00 & 378.12 $\pm$ 0.00 & 5.99 $\pm$ 0.00 & \textit{2.39 $\pm$ 0.00} & 0.43 $\pm$ 0.00 \\
SFIM & \underline{0.14 $\pm$0.01}  & 19.90$\pm$ 0.00 & 384.45 $\pm$ 0.00 & 12.55 $\pm$ 0.00 & 2.87$\pm$ 0.00 & 0.37 $\pm$ 0.00 \\
Lanaras's method & 18.74 $\pm$5.10  & \textbf{29.53$\pm$ 0.71} & \textbf{ 127.24 $\pm$ 9.90} & \underline{5.01 $\pm$ 0.23} & \textbf{1.53 $\pm$ 0.16} & 0.71 $\pm$ 0.04 \\
 \hline  
\multicolumn{7}{|c|}{Data set - AVIRIS Cuprite - Multiplicative Gamma noise (3)} \\
\hline  
MR-$\beta=2$-NMF & 13.00 $\pm$0.22  & 29.66 $\pm$ 0.06 & 137.33 $\pm$ 1.08& 4.45 $\pm$ 0.07 & 2.30 $\pm$ 0.02 & 0.71 $\pm$ 0.00 \\
MR-$\beta=3/2$-NMF & 14.75 $\pm$0.62  &29.80 $\pm$ 0.04 & 134.49 $\pm$ 0.68 & 4.41 $\pm$ 0.08 & 2.24 $\pm$ 0.01 & 0.71 $\pm$ 0.00 \\
MR-$\beta=1$-NMF & 13.07 $\pm$0.19  & \textit{29.75 $\pm$ 0.04} & \textit{133.11 $\pm$ 0.69} & 4.39 $\pm$ 0.06 & 2.23 $\pm$ 0.01 & \textit{0.71 $\pm$ 0.00} \\
MR-$\beta=1/2$-NMF & 15.30 $\pm$0.13  & \underline{30.28$\pm$ 0.07} & \underline{125.51 $\pm$ 0.96} & \underline{4.00 $\pm$ 0.07} & 2.08 $\pm$ 0.02 & \underline{0.73 $\pm$ 0.00} \\
MR-$\beta=0$-NMF & 13.69 $\pm$0.12  & \textbf{32.18$\pm$ 0.06} & \textbf{98.65 $\pm$ 0.61} & \textbf{2.55 $\pm$ 0.02} & \textbf{1.49 $\pm$ 0.01} & \textbf{0.81 $\pm$ 0.00} \\
GSA & \textit{0.29 $\pm$0.08}  & 19.34$\pm$ 0.00 & 442.91 $\pm$ 0.00 & 6.20 $\pm$ 0.00 & 6.15 $\pm$ 0.00 & 0.42 $\pm$ 0.00 \\
CNMF & 1.38 $\pm$0.20  & 17.71$\pm$ 0.20 & 505.46 $\pm$ 11.47 & 6.04 $\pm$ 0.10 & \textit{1.93 $\pm$ 0.03} & 0.39 $\pm$ 0.02 \\
HySure & 9.38 $\pm$0.12  & 21.92$\pm$ 0.30 & 434.95 $\pm$ 18.14 & 9.13$\pm$ 1.68 & 7.86 $\pm$ 0.30 & 0.37 $\pm$ 0.01 \\
FUMI & \textbf{0.12 $\pm$0.06}  & 23.31$\pm$ 0.00 & 339.19 $\pm$ 0.00 & 7.48 $\pm$ 0.00 & 6.25 $\pm$ 0.00 & 0.43 $\pm$ 0.00 \\
GLP & 2.71 $\pm$0.13  & 21.26$\pm$ 0.00 & 334.89 $\pm$ 0.00 & 4.41 $\pm$ 0.00 & 3.27 $\pm$ 0.00 & 0.47 $\pm$ 0.00 \\
MAPSMM & 21.73 $\pm$0.48  & 21.47$\pm$ 0.00 & 328.00 $\pm$ 0.00 & \textit{4.18 $\pm$ 0.00} & 2.91 $\pm$ 0.00 & 0.57 $\pm$ 0.00 \\
SFIM & \underline{0.14 $\pm$0.01}  & 20.80$\pm$ 0.00 & 352.83 $\pm$ 0.00 & 5.64 $\pm$ 0.00 & 3.52$\pm$ 0.00 & 0.45 $\pm$ 0.00 \\
Lanaras's method & 24.62 $\pm$3.70  & 28.02$\pm$ 0.56 &  150.84 $\pm$ 9.76 & 4.72 $\pm$ 0.17 & \underline{1.74 $\pm$ 0.12} & 0.69 $\pm$ 0.04 \\
 \hline
\end{tabular}
\end{center}
\end{table*}
\end{center}

%








\end{document}